\newtheorem{theorem}{Theorem}
\newtheorem{definition}[theorem]{Definition}
\newtheorem{lemma}[theorem]{Lemma}
\newtheorem{conjecture}[theorem]{Conjecture}
\newtheorem{claim}[theorem]{Claim}
\newtheorem*{theorem-non}{Theorem}
\def\P{\mathcal{P}}
\def\C{\mathfrak{C}}
\def\O{\mathcal{O}}
\def\comps{\mathcal{S}}
\def\comp{S}
\def\coreSet{\mathcal{R}}
\def\centeri{\textit{center\xspace}}
\def\core{\textit{core }}
\def\tw{r }
\def\ra{t }
\def\shadow{\textit{shadow }}
\def\cov{covered\xspace}
\def\uncov{uncovered\xspace}
\def\vol{\texttt{Vol}\xspace}
\def\uncovset{\mathtt{Uncov}}
\def\og1{\overline{G^1}}
\def\core{R}
\def\centre{Y}
\def\shadow{Q}
\def\ball{\mathcal{B}}
\def\bag{B}
\def\barC{\bar{\C}}
\title{Approximate Max-Flow Min-Multicut Theorem for Graphs of Bounded Treewidth}
\begin{document}

\author{}
\author{Tobias~Friedrich}
\author{Davis~Issac}
\author{Nikhil~Kumar}
\author{Nadym~Mallek}
\author{Ziena~Zeif}
\affil{\normalsize
	\{firstname.lastname\}@hpi.de\\
	\vspace*{\baselineskip}
	Hasso Plattner Institute\authorcr
	University of Potsdam\authorcr
	Potsdam, Germany
}

\date{November 2022}

\maketitle

\begin{abstract}
We prove an approximate max-multiflow min-multicut theorem for bounded treewidth graphs. In particular, we show the following: Given a treewidth-$r$ graph, there exists a (fractional) multicommodity flow of value $f$, and a multicut of capacity $c$ such that $ f \leq c \leq \O(\ln (r+1)) \cdot f$. It is well known that the multiflow-multicut gap on an $r$-vertex (constant degree) expander graph can be $\Omega(\ln r)$, and hence our result is tight up to constant factors. Our proof is constructive, and we also obtain a polynomial time $\O(\ln (r+1))$-approximation algorithm for the minimum multicut problem on treewidth-$r$ graphs. Our algorithm proceeds by rounding the optimal fractional solution to the natural linear programming relaxation of the multicut problem. We introduce novel modifications to the well-known region growing algorithm to facilitate the rounding while guaranteeing at most a logarithmic factor loss in the treewidth.
\end{abstract}

\section{Introduction}

Given an undirected graph with edge capacities and $k$ source-sink pairs, the \emph{maximum multicommodity flow problem} asks for the maximum amount of (fractional) flow that can be routed between the source-sink pairs. Multicommodity flow problems (and its variants) have been studied extensively over the last five decades and find extensive applications in VLSI design, routing and wavelength assignment etc.~\cite{zang2000review}.

A natural dual to the maximum multicommodity flow problem is the \emph{minimum multicut problem}. Given an edge-capacitated graph with $k$ source-sink pairs, a multicut is a set of edges whose removal disconnects all the source-sink pairs, and the capacity (or value) of the cut is the sum of capacities of the edges in it. The value of any feasible multicommodity flow is at most the capacity of any feasible multicut. The ratio of the values of the minimum multicut and maximum multicommodity flow is called the \emph{multiflow-multicut gap}. Minimum multicut is NP-Hard to compute, even in very restricted settings such as trees~\cite{garg1997primal}. More precisely, it is known to be equivalent to the vertex cover problem in stars with unit weights~\cite{garg1997primal}, which implies that it is APX-Hard.

There is a rich literature on proving bounds on the multiflow-multicut gap. Perhaps the most famous of them is the max-flow min-cut theorem of Ford and Fulkerson~\cite{ford2009maximal}, which states that the value of the minimum multicut is equal to the maximum (integral) flow when $k=1$. Hu~\cite{hu1963multi} extended the result of Ford and Fulkerson to show that the multiflow-multicut gap is 1 even when $k=2$. There are many other special cases where the multiflow-multicut gap is 1, for example when $G$ is a path, but in general it can be arbitrarily large. Garg, Vazirani, and Yannakakis~\cite{garg1996approximate} proved a tight bound of $\Theta(\ln (k+1))$ on the multiflow-multicut gap for any graph $G$. If $G$ is a tree, then the  multiflow-multicut gap is exactly 2 \cite{garg1997primal}.

For $K_r$ minor-free graphs, Tardos and Vazirani~\cite{tardos1993improved} used the decomposition theorem of Klein Plotkin and Rao~\cite{klein1993excluded} to prove a bound of $\O(r^3)$ on the multiflow-multicut gap. This bound was subsequently improved to $\O(r^2)$ by Fakcharoenphol and Talwar \cite{fakcharoenphol2003improved}, and then to $\O(r)$ by \cite{abraham2014cops}. Given any natural number $n$, there exist graphs on $n$ vertices such that the multiflow-multicut gap on them is $\Omega(\ln (n+1))$ \cite{garg1996approximate}. This also implies a lower bound of $\Omega(\ln (r+1))$ for graphs which do not contain $K_r$ as a minor. It is conjectured by \cite{abraham2014cops} that this lower bound is tight (in fact they state their conjecture in terms of small diameter padded-decompostion and is slightly more general):
\begin{conjecture} \label{conjecture1}
Multiflow-multicut gap for $K_r$-minor free graphs is $\Theta(\ln( r+1))$.
\end{conjecture}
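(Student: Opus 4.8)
\noindent One direction of \Cref{conjecture1}, the lower bound $\Omega(\ln(r+1))$, is immediate: every $n$-vertex graph is trivially $K_{n+1}$-minor-free, so the $\Omega(\ln(n+1))$-gap instances of \cite{garg1996approximate} already witness a gap of $\Omega(\ln(r+1))$ within the class of $K_r$-minor-free graphs. Thus the entire content of \Cref{conjecture1} is the matching upper bound, and by LP duality — the optimum of the natural distance LP relaxation of minimum multicut ($\min \sum_e c_e x_e$ subject to $\sum_{e\in P} x_e \ge 1$ for every $s_i$--$t_i$ path $P$, $x_e \ge 0$) equals the maximum fractional multiflow — it suffices to show that this LP has integrality gap $\O(\ln(r+1))$ on $K_r$-minor-free graphs. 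Concretely, one must round an optimal fractional solution $\{x_e\}$ of value $V = \sum_e c_e x_e$ into a multicut of capacity $\O(\ln(r+1))\cdot V$. The region-growing rounding of Garg--Vazirani--Yannakakis already achieves this with an $\O(\ln(k+1))$ factor in the number $k$ of pairs; the task is to trade the dependence on $k$ for a dependence on the excluded minor.

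The cleanest plan is to reduce the statement to a metric decomposition problem. Scale the shortest-path metric $d$ induced by $\{x_e\}$ so that $d(s_i,t_i)\ge 1$ for all $i$. A \emph{padded decomposition} of $(G,d)$ with parameter $\beta$ is a distribution over partitions of $V(G)$ into clusters of diameter $<1$ such that $\Pr[\,u,v \text{ lie in different clusters}\,]\le \beta\cdot d(u,v)$ for every edge $uv$. Cutting all inter-cluster edges then separates every pair (each cluster has diameter $<1$) and has expected capacity $\le \beta\sum_e c_e d_e = \beta V$; hence it would suffice to prove that every $K_r$-minor-free graph admits a padded decomposition with $\beta = \O(\ln(r+1))$. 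This is precisely the (more general) conjecture of \cite{abraham2014cops}; the best known bound is $\O(r)$, via the Klein--Plotkin--Rao decomposition \cite{klein1993excluded}. Carrying out this reduction is the routine part of the plan.

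For the decomposition itself I would attempt two complementary attacks. The first uses the graph-minor structure theorem of Robertson and Seymour: write $G$ as a clique-sum of pieces that are almost-embeddable on bounded-genus surfaces (bounded number of apices, bounded-width vortices). Inside each piece the graph is, up to apices and vortices, $K_6$- and $K_{3,3}$-minor-free, so a \emph{constant}-parameter padded decomposition is available for the embeddable part; the bounded-width vortices are absorbed the way bags of a bounded-treewidth decomposition are (using the modified region-growing developed in this paper), and the $\O(r)$ apices in a piece are peeled off as an auxiliary layer, contributing an extra $\O(\ln(r+1))$ factor since an apex behaves like a hub incident to many clusters; the clique-sum tree is glued using the bounded adhesion to control double counting. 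The second attack bypasses the structure theorem and tries to build, directly from $G$ and $d$, a single low-diameter decomposition whose padding parameter is $\O(\ln(r+1))$ — for instance from one carefully chosen low-stretch-like spanning tree, or from a global flow/potential argument that controls all distance scales simultaneously rather than through the $r$-fold recursion of Klein--Plotkin--Rao.

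The crux — and the reason \Cref{conjecture1} remains open — is that every known route loses a factor \emph{polynomial} in $r$ (tower-type, in the case of the structure theorem), not logarithmic: each BFS-layer peeling in Klein--Plotkin--Rao, and each apex in the structure theorem, is paid for \emph{multiplicatively}. Converting this into an \emph{additive}, amortized cost that telescopes to $\O(\ln(r+1))$ appears to require a genuinely new clustering primitive for excluded-minor graphs that does not recurse $\Omega(r)$ times. I would therefore expect essentially all of the difficulty to lie in this step, with everything upstream of it — the LP-duality reduction, the padded-decomposition-to-multicut rounding, and the clique-sum bookkeeping — being comparatively standard.
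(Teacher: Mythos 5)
You have correctly identified that \Cref{conjecture1} is precisely that — a conjecture — and the paper never proves it; the paper (attributing the statement to \cite{abraham2014cops}) resolves only the bounded-treewidth special case via \Cref{theorem: flow-cut gap} and \Cref{Theorem: main}. So there is no ``paper's own proof'' to compare against, and your write-up is honest in acknowledging that the statement remains open. Your surrounding analysis is sound: the $\Omega(\ln(r+1))$ lower bound does follow trivially from the $\Omega(\ln(n+1))$ expander instances of \cite{garg1996approximate}, the upper bound is equivalent (by LP duality and standard rounding) to an $\O(\ln(r+1))$ bound on the integrality gap of the distance LP, and the reduction to small-diameter (or padded) decomposition with parameter $\O(\ln(r+1))$ is exactly the generalized form of the conjecture stated in \cite{abraham2014cops}. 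Your diagnosis of the obstruction is also accurate: the best known decompositions (Klein--Plotkin--Rao \cite{klein1993excluded}, Fakcharoenphol--Talwar \cite{fakcharoenphol2003improved}, Abraham et al.~\cite{abraham2014cops}) pay multiplicatively per recursion level or per apex, yielding $\O(r)$ rather than $\O(\ln r)$, and the graph-minor structure theorem route incurs non-explicit (tower-type) dependence.

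The one substantive caveat worth flagging in your plan: the structure-theorem attack, as sketched, already has a gap that is not merely ``bookkeeping.'' Peeling $\O(r)$ apices per piece the naive way loses a \emph{linear} factor in $r$, not $\O(\ln(r+1))$, unless one has an amortization argument across apices of exactly the kind you identify as missing; and the ``bounded adhesion'' of clique-sums is bounded by a function of $r$ that is itself large, so controlling the double-counting there is not routine. In other words, even the parts you label ``comparatively standard'' (the clique-sum bookkeeping, absorbing vortices) are where the polynomial-in-$r$ losses actually appear in known attempts, so the difficulty is not as cleanly isolated to a single missing clustering primitive as your last paragraph suggests. That said, you have not claimed to close the conjecture, and your assessment that it is open — and that the present paper contributes only the treewidth case — is correct.
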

Since treewidth $r$ graphs do not contain $K_{r+2}$ as a minor, the above mentioned results also imply an upper bound of $\O(r)$ for treewidth $r$ graphs. In fact, \cite{abraham2014cops} show that their techniques can be extended to prove a bound of $\O(\log r + \log \log n)$ for treewidth-$r$ graphs. A lower bound of $\Omega(\ln (r+1))$ on the multiflow-multicut gap for graphs of treewidth $r$ follows form the discussion above. In this work, we show that Conjecture \ref{conjecture1} is true for graphs of bounded treewidth, which forms an important subclass of minor-free graphs. In particular, we show that the multiflow-multicut gap for graphs of treewidth $r$ is $\Theta(\ln (r+1))$. Our proof is constructive, and we also obtain a polynomial time $\O(\ln (r+1))$ approximation algorithm for the minimum multicut problem on graphs of treewidth-$r$. 
\begin{theorem} \label{theorem: flow-cut gap}
    Let $G=(V,E)$ be a treewidth-$r$ graph with edge capacity $c:E \rightarrow \mathcal{R}_{\geq 0}$ and $(s_i,t_i), 1 \leq i \leq k$ be a set of source-sink pairs. Then there exists a polynomial time computable multicommodity flow of value $f$ and a multicut of value $c$ such that $f \leq c \leq \O(\ln (r+1)) \cdot f$.
\end{theorem}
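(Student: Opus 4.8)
The plan is to route everything through the standard linear-programming relaxation of minimum multicut and to put all the work into the rounding. Write the fractional multicut LP: assign a length $x_e \ge 0$ to each edge, minimise $\sum_e c_e x_e$ subject to $d_x(s_i,t_i)\ge 1$ for every $i$, where $d_x$ is the shortest-path metric induced by $x$ (equivalently $\sum_{e\in P}x_e\ge 1$ for every $s_i$--$t_i$ path $P$). This LP is solvable in polynomial time (distance-variable formulation, or a shortest-path separation oracle), and its dual is precisely the path formulation of the maximum fractional multicommodity flow, so if $F$ denotes the common optimum then $F=f^{*}$, the maximum multiflow value. Since every multiflow is bounded by every multicut, it suffices to construct in polynomial time an \emph{integral} multicut $M$ with $c(M)=\O(\ln(r+1))\cdot F$: taking $f:=f^{*}$ and $c:=c(M)$ then yields $f\le c\le \O(\ln(r+1))\cdot f$, which is the theorem.

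For the rounding I would run a region-growing procedure in the metric $d_x$, but steered by a width-$r$ tree decomposition $(T,\{\bag_t\})$ so that each ball-growing step loses only a factor $\ln(r+1)$ rather than the usual $\ln(k+1)$. Recall that in the Garg--Vazirani--Yannakakis analysis the $\ln(k+1)$ arises because each of up to $k$ balls is seeded with volume $F/k$, so a ball's volume can swell by a factor $\approx k$ before its boundary-to-volume ratio can be forced below $\O(\ln(k+1))$. The modification is to let the bags of $T$ choose both the centres and the seeds: repeatedly select a bag (following the recursive structure of $T$), grow a ball of radius ${}<1/2$ around each of its $\le r+1$ vertices one at a time inside the current residual graph, delete each ball together with its boundary edges, and add the boundary to $M$; and seed each such ball not with a global $F/k$ but with a \emph{local} amount — a $\tfrac{1}{r+1}$ fraction of the LP-volume $\sum c_e x_e$ lying within distance $1/2$ of its centre in the residual graph. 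Because a ball's volume is then at most $(r+2)$ times its seed, the usual exponential argument produces a stopping radius at which the cut boundary is $\O(\ln(r+1))$ times the ball's volume.

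It remains to bound the total volume paid for, i.e.\ $\sum(\text{seeds})+\sum(\text{deleted primal volume})$. The deleted primal volume is charged at most once overall, since each edge is removed at most once, contributing $\O(\ln(r+1))\cdot F$. The delicate term is the sum of the local seeds, and this is where the tree decomposition does the real work: bags selected in different branches of the recursion sit in vertex-disjoint residual subproblems, and each bag has only $r+1$ vertices, so — once the recursion on $T$ is organised so that no edge is "seen" across too many levels — every unit $c_e x_e$ of LP-volume lies within distance $1/2$ of only $\O(r)$ centres, whence $\sum(\text{seeds})=\O(F)$. Adding the two contributions bounds $c(M)$ by $\O(\ln(r+1))\cdot F$. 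Finally one checks that $M$ is a genuine multicut: every deleted region has $d_x$-diameter ${}<1$ (residual distances only increase), hence never contains both endpoints of a pair, and running the procedure until the residual graph is empty leaves every pair separated; all steps are polynomial.

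The main obstacle is exactly the seed-accounting in the previous paragraph. A na\"ive recursion that repeatedly splits $G$ along a balanced separator bag charges each edge's local seed once per recursion level, and with $\Theta(\log n)$ levels this reinjects a $\log\log n$ factor — essentially the route that gives the earlier $\O(\log r+\log\log n)$ bound. Obtaining a clean $\O(\ln(r+1))$ requires a sharper scheme: one must choose which bags to grow regions around, and in what order, so that deleting a region "uses up" enough residual volume that the edges near later centres cannot be over-charged, making the seed budget telescope to $\O(F)$ with no dependence on $n$. I expect this is precisely where the "novel modifications to the region growing algorithm" mentioned in the abstract lie, and it is the step I would concentrate on; the rest is the familiar LP-duality-plus-region-growing template.
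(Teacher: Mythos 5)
You have the right high-level template (solve the multicut LP, region-grow with \emph{locally scaled} seed volumes, steer the growth by the tree decomposition so the $\ln(k+1)$ of Garg--Vazirani--Yannakakis is replaced by $\ln(\mathrm{poly}(r))=\O(\ln(r+1))$), and you correctly diagnose that the whole difficulty lies in the seed-accounting: showing that no unit of LP-volume $c_e x_e$ is charged as a seed more than $\mathrm{poly}(r)$ times. But that accounting is precisely the step you leave unproved, and you say so explicitly in your last paragraph. As written, the claim ``every unit $c_e x_e$ of LP-volume lies within distance $1/2$ of only $\O(r)$ centres'' is an assertion, not an argument; with a naive recursion on separator bags you yourself note it fails (giving back a $\log\log n$), and it is not clear that any bag-ordering of the kind you sketch makes it true. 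So the proposal has a genuine gap, and it is the central one.

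For comparison, the paper closes this gap with machinery that is more involved than ``pick a good bag order''. It first passes to a \emph{width-$r$ decomposition} with disjoint bags, then runs a three-phase algorithm. Phase 1 builds a collection of \emph{cores} (balls of radius $a=1/8$ grown from the uncovered part of a bag), iterating at most $r$ times and assigning each core a \emph{center} of $\le r$ vertices. Phase 2 processes cores top-down and grows a ball of radius $<b-a=1/8$ around each residual core to form components, using the region-growing lemma with initial volume set to $\vol'(\cdot)/h$ for $h=2r^3+2r$. The key combinatorial lemma (\cref{lem:vertexshadowbound}) shows that a fixed vertex lies in the ``strict shadow'' of at most $r^3$ cores; its proof uses ranks of cores, center-bags, shadow-domains, and ancestor/descendant structure along the root-to-bag path, together with a crucial $b=2a$ triangle-inequality argument. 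This $r^3$ bound (not $\O(r)$, though the logarithm absorbs the difference) is what makes the seed budget telescope to $\O(F)$. Phase 3 then applies \cref{lemma:sdd_width_0} inside each component — needed because each component is only guaranteed to be within distance $1/4$ of a set of $\le r$ center vertices, not to have small diameter — to produce the final small-diameter decomposition. You will want to look at \cref{sec:shadow} in particular: the shadow/rank machinery there is the content of the ``novel modifications'' you anticipated but did not supply.
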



\section{Preliminaries} \label{section: preliminaries}
Let $G=(V,E)$ be a simple undirected graph with edge capacities $c\colon E\rightarrow\mathds{Z}_{\geq 0}$. Let $(s_i,t_i)$ be the source-sink pairs.
Let $\P$ be the set of all paths in $G$ between a source and its corresponding sink. A multiflow $f\colon \P\rightarrow\mathds{R}_{\geq 0}$ is feasible if for every edge $e\in E$, the total flow on all paths containing the edge, $\sum_{P:e\in\P} f_P$, is at most the capacity of the edge, $c(e)$. For a path $P\in\P$, we refer to $f_P$ as the value of flow on $P$.
A maximum multiflow is a feasible flow $f$ which maximises $\sum_{P\in\P} f_P$. A multicut is a set of edges $E'\subseteq E$ such that every $P\in\P$ contains at least one edge in $E'$. Equivalently, a multicut is a set of edges whose removal disconnects every source-sink pair. Since a multicut contains an edge of every path in $\P$, the value of any feasible multicut is at least the value of any feasible multiflow. The ratio of the minimum multicut to the maximum multiflow is called the multiflow-multicut gap.

A cut $S\subseteq V$ is a partition of the vertex set $(S,V\setminus S)$. Let $\delta_{E}(S)$ denote the edges in $E$ with exactly one endpoint in $S$. We will usually drop the subscript $E$ if it is clear from the context. We may also subscript by the graph and say $\delta_G(S)$. For a subset $E'\subseteq E$ let $c(E')$ be the total capacity of edges in $E'$. Given a graph $G=(V,E)$ with edge-length $l:E \rightarrow \mathds{R}_{\geq 0}$, we denote the length of the shortest $u,v$ path in $G$ (w.r.t $l$) by $d_{G}(u,v)$,  where we may omit the subscript $G$ if it is clear from the context.
Similarly for a set $S$ we use $d_{G}(u,S)$ or $d(u,S)$ to denote the shortest path between $u$ and $S$.
Whenever we say distance between two vertices we mean the distance $d$ ie.~the shortest path distance w.r.t to $l$. For graph $G=(V,E)$ and $V'\subseteq V$ we will use $G-V'$ to denote the graph induced in $G$ by $V\setminus V'$.
For a vertex $s$ and value $\alpha$, we use $\ball_G(s,\alpha)$ to denote the set of vertices that are a distance of at most $\alpha$ from $s$.
Also, for a vertex set $S$ we use $\ball_G(S,\alpha)$ to denote the set of vertices that are a distance of at most $\alpha$ from $S$.
We call $\ball_G(S,\alpha)$ as the ball of radius $\alpha$ centered around $S$ in $G$. We will omit the subscript $G$ when it is clear.

A graph $G=(V,E)$ is said to have treewidth at most $r$ if there exist subsets $S_1,S_2,\ldots,S_m$ of the vertex set $V$ called bags and a tree $T$ with $S_1,S_2,\ldots,S_m$ as vertices such that (i) $|S_i| \leq r+1$ for $i \in [1,m]$ (ii) for each $(u,v) \in E$, there exists a $j \in [1,m]$ such that $u,v \in S_j$ (iii) for each $v \in V$, the subgraph induced by $T_v=\{S_j \mid v \in S_j\}$ on $T$ is connected. It follows from the definition that trees have treewidth-1 and any graph on $n$ vertices has treewidth at most $n$. \cite{feige2005improved} gave a polynomial time algorithm that computes a bag decomposition of width $\mathcal{O}(r \cdot \sqrt{\ln(r+1)})$, if $G$ has a treewidth of at most $r$. We use the above algorithm to compute a bag-decomposition of size $\mathcal{O}(r \cdot \sqrt{\ln(r+1)})$. Since all our guarantees are a logarithmic function of the bag-decomposition size, this approximation does not have any asymptotic effect on our guarantees. For the sake of presentation, from now on we assume that the tree decomposition of the graph is given to us.

\section{Linear Programming Formulation for Mulicut}
We first describe an integer programming formulation for the minimum multicut problem. For every edge $e \in E$, we have a variable $x(e)$, which indicates if the edge is picked in the cut. We want to disconnect every path between the source-sink pairs, therefore we have a constraint saying that at least one edge must be picked on every path between a source-sink pair. Let $d(u,v)$ denote the distance between the vertices $u$ and $v$ by setting the length of each edge $e \in E$ to be equal to $x(e)$. We drop the integrality constraints to obtain a linear programming (LP) relaxation for the multicut problem. The dual of the multicut LP is exactly the maximum multicommodity flow problem \cite{garg1996approximate}. In particular, we have a non-negative flow variable $f_P$ for each $P \in \mathcal{P}$, and capacity constraints for each edge. 

\vspace{0.2cm}

\begin{minipage}{0.49\textwidth}
    \begin{align*}
        \min \displaystyle\sum_{e \in E}& c(e)x(e)\\[3pt]
        d(s_i,t_i) \geq 1 & \text{ for } 1 \leq i \leq k\\[11pt]
        0 \leq  x(e) & \leq 1 \text{ for } e \in E
    \end{align*}
\end{minipage}
\begin{minipage}{0.49\textwidth}
    \begin{align*}
        \max \displaystyle \sum_{P \in \mathcal{P}} & f_P\\
        \displaystyle\sum_{P:e \in P} f_P  \leq  c(e) & \text{ for } e \in E\\    
        f_P \geq  0 \text{ for } & P \in \mathcal{P}
    \end{align*}
\end{minipage}

\vspace{0.4cm}
Even though there are an exponential number of constraints, it is well known that the optimal solution to the above LPs can be computed in polynomial time \cite{garg1996approximate}. We refer to the optimum solution to the above LP as the minimum fractional multicut and maximum (fractional) multicommodity-flow. By the strong duality theorem, the minimum fractional multicut is exactly equal to the maximum multicommodity flow that can be routed between the source-sink pairs. If we can construct a multicut with value at most $\alpha \geq 1$ times the minimum fractional solution, we obtain an $\alpha$-approximation algorithm for the minimum multicut problem. Furthermore, since the value of any feasible multicut is at least the value of the minimum fractional multicut, we also obtain the following approximate max-flow min-cut theorem:
\begin{center}
    $\text{max-multicommodity-flow} \leq \text{min-multicut} \leq \alpha \cdot \text{max-multicommodity-flow}$
\end{center}
Garg, Vazirani and Yannakakis~\cite{garg1996approximate} gave a region growing algorithm to obtain a multicut of value at most $\O(\ln (k+1))$ times the minimum fractional solution. The region growing algorithm was first introduced by Leighton and Rao~\cite{leighton1999multicommodity} in the context of sparsest-cut problem. Our algorithm also builds upon this idea. We next describe the algorithm by \cite{garg1996approximate} in more detail. 

\section{Region Growing Algorithm}
\label{sec:region}
As before, suppose we are given a graph $G=(V,E)$ with edge capacities $c:E \rightarrow \mathds{R}_{\geq 0}$. Furthermore, we also have a length function on the edges $l: E \rightarrow \mathds{R}_{\geq 0}$. 
Recall that for $S \subseteq V$, $\ball(S,\ra)$ denotes the set of vertices whose distance from $S$ is at most $\ra$. We call $\ball(S,\ra)$ the \emph{ball of radius $\ra$ centred around $S$}. 
Let $\vol_G(S,\ra)$ be the \emph{volume} of the edges contained inside the ball $\ball(S,\ra)$, defined as follows

\begin{equation}
    \label{equation::regionGrowing}
    \vol_G(S,\ra)=\vol(S,0)+\displaystyle\sum_{\substack{(u,v) \in E, \\ u,v \in \ball(S,\ra)}}c(u,v) \cdot l(u,v)+
    \displaystyle\sum_{\substack{(u,v) \in E, \\
    u \in \ball(S,\ra), v \notin \ball(S,\ra)}}c(u,v) 
    \cdot (\ra-d(S,u)) 
\end{equation}
Here, $\vol(S,0)$ is the \emph{initial volume} at $S$ and can be chosen to be any positive real number. 
We also define $\vol'$ to be the volume without the initial volume ie.,
\begin{align*}
    \vol'_G(S,t) = \vol_G(S,t)-\vol(S,0).
\end{align*}
Let $C_G(S,\ra)$ be the total capacity of the cut edges going across the set $\ball(s,\ra)$. More formally, 
\begin{equation*}
 C_G(S,\ra)= \displaystyle\sum_{(u,v) \in \delta(\ball(s,\ra))}c(u,v)   
\end{equation*}
In the above definitions, the subscript $G$ may be omitted when the graph is clear from context.
Given $a, b \in \mathds{R}_{\geq 0}$, we will use $r\sim \mathds{U}(a,b)$ to denote the fact that $\ra$ is chosen uniformly at random in the interval $(a,b)$. \cite{garg1996approximate} showed that if $\ra \sim \mathds{U}(a,b)$, then the expected value of $C(S,\ra)$ can be bounded in terms of $\vol(S,\ra)$. 
\begin{lemma} $\displaystyle \mathop{\mathds{E}}_{\ra \sim \mathds{U}(a,b)} \left( \dfrac{C(S,\ra)}{\vol(S,\ra)} \right) \leq  \dfrac{1}{b-a} \cdot \ln  \left( \dfrac{\vol(S,b)}{\vol(S,a)} \right)$.

\end{lemma}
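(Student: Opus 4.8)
The plan is to bound the expectation by treating $C(S,\ra)$ as essentially the derivative of $\vol(S,\ra)$ with respect to $\ra$, and then integrate. First I would observe that, as a function of $\ra$, the volume $\vol(S,\ra)$ is piecewise linear and nondecreasing: as $\ra$ increases past no vertex's distance value $d(S,u)$, the only change is in the two sum terms of \eqref{equation::regionGrowing}, and the rate of increase of $\vol(S,\ra)$ at a generic $\ra$ (one that is not equal to $d(S,u)$ for any vertex $u$) is exactly $\sum_{(u,v)\in\delta(\ball(S,\ra))} c(u,v) = C(S,\ra)$ — the capacity term for each cut edge $(u,v)$ contributes $c(u,v)\cdot(\ra - d(S,u))$, whose derivative is $c(u,v)$, and edges fully inside contribute a constant. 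So $\frac{d}{d\ra}\vol(S,\ra) = C(S,\ra)$ at all but finitely many points, and in particular $\vol(S,\ra)$ is absolutely continuous in $\ra$.

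Next I would write out the expectation as an integral and substitute this identity:
\begin{align*}
\mathop{\mathds{E}}_{\ra \sim \mathds{U}(a,b)}\left(\frac{C(S,\ra)}{\vol(S,\ra)}\right)
= \frac{1}{b-a}\int_a^b \frac{C(S,\ra)}{\vol(S,\ra)}\,d\ra
= \frac{1}{b-a}\int_a^b \frac{1}{\vol(S,\ra)}\cdot\frac{d}{d\ra}\vol(S,\ra)\,d\ra.
\end{align*}
The integrand is now $\frac{d}{d\ra}\ln\bigl(\vol(S,\ra)\bigr)$, so by the fundamental theorem of calculus the integral evaluates to $\ln(\vol(S,b)) - \ln(\vol(S,a)) = \ln\bigl(\vol(S,b)/\vol(S,a)\bigr)$, giving the claimed bound. (Here it is used that $\vol(S,0)>0$, hence $\vol(S,\ra)>0$ throughout, so $\ln$ and the division are well defined.) Note the bound is actually an equality in this argument; the inequality in the statement presumably leaves room for the degenerate case $b - a = 0$ or for $\vol$ being only piecewise linear rather than smooth.

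The main obstacle is purely a matter of care rather than depth: handling the finitely many breakpoints $\ra = d(S,u)$ where $\vol(S,\ra)$ is not differentiable and where $C(S,\ra)$ may jump (a vertex entering the ball can change which edges are cut). Since there are only finitely many such points, they have measure zero and do not affect the integral, and $\vol(S,\cdot)$ remains continuous and piecewise $C^1$ across them, so the fundamental theorem of calculus still applies on each subinterval and the pieces telescope. I would also need to confirm the measurability/integrability of $\ra \mapsto C(S,\ra)/\vol(S,\ra)$, which is immediate since it is piecewise continuous and bounded. Apart from these bookkeeping points, the proof is the short calculus computation above.
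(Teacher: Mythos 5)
Your overall route---viewing $C(S,\ra)$ as the almost-everywhere derivative of $\vol(S,\ra)$ and integrating $\frac{d}{d\ra}\ln\vol(S,\ra)$ over $[a,b]$---is the standard argument; the paper itself gives no proof of this lemma but defers to Garg--Vazirani--Yannakakis and to Section 8.3 of Williamson--Shmoys, whose proof is essentially the one you outline. However, one step as you state it is false, and it is precisely the point that makes the statement an inequality rather than an equality: $\vol(S,\cdot)$ is \emph{not} continuous in $\ra$, let alone absolutely continuous. At a breakpoint $\ra = d(S,v)$ where a new vertex $v$ enters the ball, every cut edge $(u,v)$ with $u\in\ball(S,\ra)$ switches from the partial contribution $c(u,v)\cdot(\ra - d(S,u))$ to the full contribution $c(u,v)\cdot l(u,v)$; unless the shortest path from $S$ to $v$ happens to run through $u$ along that very edge, one has $d(S,v)-d(S,u) < l(u,v)$, so $\vol(S,\cdot)$ jumps \emph{upward} at that radius. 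Hence your claims that $\vol(S,\cdot)$ ``remains continuous'' across breakpoints and that the computation gives equality (with the $\leq$ only covering degenerate cases) are incorrect.

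The repair is short and keeps your plan intact: $\vol(S,\cdot)$ is nondecreasing and piecewise linear, with derivative equal to $C(S,\ra)$ on each open interval between consecutive breakpoints. Applying the fundamental theorem of calculus separately on each such interval gives the difference of $\ln\vol$ evaluated at the one-sided limits at its endpoints, and since all jumps of $\vol(S,\cdot)$ are upward, $\ln\vol$ can only increase across each breakpoint; telescoping over the intervals therefore yields $\int_a^b \frac{C(S,\ra)}{\vol(S,\ra)}\,d\ra \leq \ln\vol(S,b) - \ln\vol(S,a)$, which after dividing by $b-a$ is exactly the stated bound (positivity of the initial volume $\vol(S,0)$ justifies the division and the logarithms, as you note). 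With this correction your argument is complete and coincides with the cited proof.
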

Since there are at most $n$ distinct balls $\ball(S,\ra)$ for any $S \subseteq V$, the above also implies the existence of a polynomial time computable $\ra_0 \in [a,b)$ such that 

\begin{equation}
    \label{equation::estimateCost}
    C(S,\ra_0)  \leq  \dfrac{1}{b-a} \cdot \ln  \left( \dfrac{\vol(S,b)}{\vol(S,a)} \right) \cdot \vol(S,\ra_0)
\end{equation}
We will refer to this as the \textbf{region growing lemma} \footnote{we follow the presentation in Section 8.3 of the book by Williamson-Shmoys \cite{williamson2011design}}. \cite{garg1996approximate} use the region growing lemma to construct a multicut as follows. First they find an optimal solution to the linear programming relaxation for the multicut, say $\{x_{e}^{*}\}_{e \in E}$, and set $l(e)=x^{*}(e)$. The algorithm picks edges into the multicut as follows: if there exists a component containing an $(s_i,t_i)$ pair, then choose $\ra \in [0,1/2)$ as guaranteed by the region growing lemma and include the edges in $\delta(\ball(s_i,\ra))$ into the multicut. Since the diameter of the graph induced by vertices in $\ball(s_i,\ra)$ is at most $1$, it does not contain any source-sink pair. Hence, we can safely remove the vertices in $\delta(\ball(s_i,\ra))$ from the graph and iterate on the rest of the graph. The algorithm terminates when no connected component contains an $(s_i,t_i)$ pair. Note that the algorithm picks at most $k$ cuts using region growing. They set the initial volumes $\vol(s_i,0)=V^{*}/k$ for $i=1,2,\ldots,k$, where $V^{*}=\sum_{e \in E}c(e) \cdot x^{*}(e)$. Using the region growing lemma, the cost of the solution can then be bounded by $\mathcal{O}(\ln (k+1)) \cdot V^{*}$. 

\section{Multicuts and Small Diameter Decomposition} \label{Section:Phase_3}
Let $G=(V,E)$ be a graph with edge capacities $c:E \rightarrow \mathds{R}_{\geq 0}$ and edge lengths $l: E \rightarrow \mathds{R}_{\geq 0}$. A subset of edges $E' \subseteq E$ is called a \textbf{small diameter decomposition} of $G$ if all the components of $G'=(V,E \setminus E')$ have diameter strictly less than 1 with respect to the distance $d_G$ (notice that the distance is not in $G'$)\footnote{this is also known as the weak-diameter decomposition}. The cost of the decomposition is the sum of all the edge capacities in $E'$, ie.~$c(E')$. Note that the small diameter decomposition corresponds to a feasible multicut when the length of edges correspond to a feasible (fractional) solution to the linear programming relaxation of the multicut. If $G$ has treewidth at most $r$, then we will give a polynomial time algorithm to find a small diameter decomposition whose cost is $\mathcal{O}\left(\ln (r+1)\right) \cdot \sum_{e \in E}c(e) \cdot l(e)$. This will give us an $\mathcal{O}\left(\ln (r+1)\right)$-approximation algorithm for the minimum multicut problem, and also imply a bound of $\mathcal{O}\left(\ln (r+1)\right)$ on the multiflow-multicut gap for graphs with treewidth at most $r$. We first show that if there exists a set $S \subseteq V$ of at most $r$ vertices such that for every vertex $v \in V$, there exists a $u \in S$ such that $d(u,v) \leq 1/4$, then there exists a small diameter decomposition with cost $\mathcal{O}(\ln (r+1)) \cdot \sum_{e \in E}c(e) \cdot l(e)$. This lemma will serve as an important building block in our algorithm for graphs of bounded treewidth.
\begin{lemma} \label{lemma:sdd_width_0}
Let $G=(V,E)$ be a graph with edge capacities $c:E \rightarrow\mathds{R}_{\geq 0}$ and edge lengths $l: E \rightarrow \mathds{R}_{\geq 0}$. Let $S \subseteq V$ be such that $|S| \leq r$ and for all $u \in V$, $d(u,S) \leq 1/4$. Then there exists a small diameter decomposition of $G$ with cost at most $1/8 \cdot \ln (r+1) \cdot F$, where $F=\sum_{e \in E}c(e) \cdot l(e)$.
\end{lemma}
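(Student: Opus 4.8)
The plan is to build the decomposition by growing regions around the vertices of $S$ one at a time, on a shrinking residual graph, in the spirit of the region growing algorithm but with $S$ playing the role of the sources; the hypothesis $|S|\le r$ is exactly what lets us afford a factor $\ln(r+1)$ (rather than $\ln(k+1)$) on each cut, since at most $r$ regions are ever produced. Concretely, enumerate $S=\{s_1,\dots,s_m\}$ with $m\le r$, set $G_1=G$, and for $i=1,\dots,m$ do the following at step $i$: if $s_i\notin V(G_i)$, do nothing; otherwise put the initial volume $\vol_{G_i}(s_i,0)=F/m$, apply the region growing lemma on the interval $[1/4,1/2)$ to obtain a radius $\alpha_i\in[1/4,1/2)$ with $C_{G_i}(s_i,\alpha_i)\le 4\ln(\vol_{G_i}(s_i,1/2)/\vol_{G_i}(s_i,1/4))\cdot\vol_{G_i}(s_i,\alpha_i)$, add the edges $\delta_{G_i}(\ball_{G_i}(s_i,\alpha_i))$ to the output set $E'$, and recurse on $G_{i+1}=G_i-\ball_{G_i}(s_i,\alpha_i)$.

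The cost estimate is then routine bookkeeping. The choice $\vol_{G_i}(s_i,0)=F/m$ bounds the ratio inside the logarithm by $(F/m+F)/(F/m)=m+1\le r+1$, because $\vol_{G_i}(s_i,1/4)\ge\vol_{G_i}(s_i,0)$ while $\vol_{G_i}(s_i,1/2)\le\vol_{G_i}(s_i,0)+\sum_{e\in E(G_i)}c(e)l(e)\le F/m+F$; hence $C_{G_i}(s_i,\alpha_i)\le 4\ln(r+1)\cdot\vol_{G_i}(s_i,\alpha_i)$. Since every cut ball is deleted before the next step, the volume it freshly carries (edge contributions, whole or fractional) is disjoint from that of all later steps, so $\sum_i\vol_{G_i}(s_i,\alpha_i)\le\sum_i\vol_{G_i}(s_i,0)+\sum_{e\in E}c(e)l(e)\le F+F$, and therefore $c(E')=\sum_i C_{G_i}(s_i,\alpha_i)=\O(\ln(r+1))\cdot F$; matching the precise constant asserted in the statement is a matter of optimizing the radius interval and tightening this accounting.

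The step I expect to be the genuine difficulty is verifying that $E'$ really is a small diameter decomposition. Each explicitly cut ball $\ball_{G_i}(s_i,\alpha_i)$ has weak diameter at most $2\alpha_i<1$ (using $d_G\le d_{G_i}$ and the triangle inequality in $G_i$), so the issue is the part of $G$ untouched after all $m$ steps: one must show that its connected components also have weak diameter $<1$, and this is exactly where both hypotheses get used. The subtlety is that a vertex $v$ lying $d_G$-close to its nearest center $s_i$ may become far from---or even disconnected from---$s_i$ in the residual graph $G_i$, so the radii $\alpha_i\ge1/4$ do not automatically absorb $v$. I would handle this by following a shortest $v$--$s_i$ path in $G$: if it survives into $G_i$, then $v$ is absorbed by $s_i$; otherwise its earliest-deleted vertex, removed at some step $l$ and hence lying in $\ball_{G_l}(s_l,\alpha_l)$, is within $1/4$ of $v$ along a path still present at step $l$, which confines $v$ to a thin shell around that deleted ball, and a counting argument over the at most $r$ centers that could be ``responsible'' for a surviving pair then rules out two surviving vertices at distance $\ge 1$. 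An alternative that sidesteps residual distances altogether is a Bartal / Calinescu--Karloff--Rabani style random partition: draw a single radius $\rho$ uniform in $[1/4,1/2)$ and a uniformly random ordering of $S$, assign each vertex to the first center within $d_G$-distance $\rho$ (covering everything, as $\rho\ge1/4$), and cut between clusters; each cluster then lies in a $d_G$-ball of radius $<1/2$ and so has weak diameter $<1$, and the standard per-edge estimate $\Pr[e\in E']\le\frac{l(e)}{1/4}\cdot H_{|S|}=\O(l(e)\ln(r+1))$ gives expected cost $\O(\ln(r+1))\cdot F$, which is made deterministic by the method of conditional expectations.
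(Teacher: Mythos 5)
Your sequential variant is where the trouble lies, and you put your finger on the right spot but did not resolve it. In your scheme a center $s_i$ that has been swallowed by an earlier ball is simply skipped (``if $s_i\notin V(G_i)$, do nothing''), so a vertex $v$ whose only center within distance $1/4$ is $s_i$ may never be absorbed by any ball: your own estimate only places $v$ within residual distance $\alpha_l+1/4<3/4$ of the center $s_l$ whose ball deleted part of the $v$--$s_i$ path, while the ball grown from $s_l$ only reaches radius $\alpha_l<1/2$. Two such surviving vertices, clinging to shells around the same or different deleted balls, can be at $d_G$-distance up to roughly $3/2$, and nothing you wrote prevents them from remaining connected in the residual graph; the ``counting argument over the at most $r$ responsible centers'' is asserted, not given, and it is exactly the missing ingredient. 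The paper closes precisely this hole with a small but essential trick: when a ball grown from $s_i$ swallows another center $s_j$, the vertex $s_j$ is re-introduced into the next residual graph and joined to every remaining vertex $u$ by an auxiliary edge of capacity $0$ and length $d_G(s_j,u)$. This maintains the invariant that every residual vertex is within (residual) distance $1/4$ of some still-unprocessed center, so after all $r$ rounds every vertex lies in some ball of radius in $[1/4,1/2)$, each of which has $d_G$-diameter below $1$; the zero capacity makes the auxiliary edges free in the cut accounting, and the cost analysis is then the same bookkeeping you carried out (yielding $8\ln(r+1)\cdot F$ --- note the constant $1/8$ in the statement appears to be a typo, as the paper's own proof also gives $8$, so you need not chase it).

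Your fallback route is, by contrast, a genuinely different and essentially sound argument: draw one radius $\rho\in[1/4,1/2)$, a random order on $S$, assign each vertex to the first center within $d_G$-distance $\rho$ (total coverage follows from $d(u,S)\le 1/4$), and cut between clusters. Each cluster sits in a $d_G$-ball of radius $<1/2$, so the weak-diameter condition is immediate, and the standard Calinescu--Karloff--Rabani estimate $\Pr[e\ \text{cut}]\le 4\,l(e)\,H_{|S|}$ gives expected cost $O(\ln(r+1))\cdot F$, hence existence of the desired decomposition; what it buys is that the covering step becomes trivial because all distances are measured in $G$ rather than in shrinking residual graphs, at the price of needing a derandomization step (which you assert but do not carry out) to match the paper's deterministic polynomial-time construction, whereas the paper's re-insertion trick keeps everything within the region-growing framework and is constructive as stated. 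So: the first route as written has a real gap at the covering step; the second route would work and is a legitimate alternative to the paper's proof.
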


\begin{proof}
We may assume w.l.o.g.~that $S$ has size exactly $r$. Let $S=\{s_1,s_2,\ldots,s_r\}$. We will use the region growing lemma to give such a decomposition. We will have $r$ iterations, and after each iteration, we will construct a new graph $G_i= (V_i, E_i)$ and a set of vertices $S_i$. The vertices in $S_i$ will have the property that $d_G(u,S_i) \leq 1/4$ for each vertex $u \in V_i$ and $|S_i|=r-i$. Initially, we set $G_1=G$ and $S_1=S$. In iteration $i$, we use the region growing lemma in the graph $G_i$ to pick a ball centered at $s_i \in S_i$ with radius $ \ra_i \in [1/4,1/2)$ such that:
\begin{equation*}
C(s_i,\ra_i)  \leq  4 \cdot \ln  \left( \dfrac{\vol(s_i,1/2)}{\vol(s_i,1/4)} \right) \cdot \vol(s_i,\ra_i)    
\end{equation*}
We set $\vol(s_i,0)=F/r$ and let $F_i=\vol(s_i,\ra_i)-\vol(s_i,0)$. Since $\vol(s_i,1/2) \leq F + \vol(s_i,0) = F +F/r$ and $\vol(s_i,1/4) \geq \vol(s_i,0) \geq F/r$, we have 
\begin{equation*}
C(s_i,\ra_i)  \leq  4 \cdot \ln (r+1) \cdot \vol(s_i,\ra_i) \leq  4 \cdot \ln (r+1) \cdot (F_i + F/r).   
\end{equation*}
We include all the edges in $\delta(\ball(s_i,\ra_i))\cap E$ in our small diameter decomposition. We then go on to construct a new graph $G_{i+1}=(V_{i+1},E_{i+1})$ by first removing the vertices in $\ball(s_i,\ra_i)$ from $G_i$. If some $s_j \in S_i \setminus \{s_i\}$ is contained in the set $\ball(s_i,\ra_i)$, then we {(re-)introduce} the vertex $s_j$ to $G_{i+1}$ and connect it to every other vertex $u \in V_i \setminus \ball(s_i,\ra_i)$ by an edge of capacity $0$ and length $d_G(s_j,u)$. It is easy to observe that in $G_{i+1}$, we have a set $S_{i+1}=\{s_{i+1},\ldots,s_r\}$ of $r-i$ vertices such that $d_G(u,S_{i+1}) \leq 1/4$ for all $u \in V_{i+1}$.

We run the above procedure on all the vertices in $S$. Since every vertex of $G$ is at a distance of at most 1/4 from at least one of the vertices in $S$, and we pick a ball of radius at least 1/4 from each of the vertices in $S$, each vertex of $G$ is a part of at least one of the $\ball(s_j,\ra_j)$. Also, each vertex in $\ball(s_j,\ra_j)$ is at a distance of strictly less than $1/2$ from $s_j$ and hence any pair of vertices in $\ball(s_j,\ra_j)$ has a distance less than one. Hence, each vertex is contained in a connected component with diameter at most one (w.r.t.~$d_G$), and we have a feasible small diameter decomposition. Using the fact that $\sum_{i=1}^{r}F_i \leq F$, the total cost of the small diameter decomposition can be bounded by: 
\begin{equation*}
   \displaystyle\sum_{i=1}^{r} C(s_i,\ra_i) \leq 4 \cdot \ln (r+1) \cdot \displaystyle\sum_{i=1}^{r} \left(F_i + \frac{F}{r}\right) = 4 \cdot \ln (r+1) \cdot \left(F + \displaystyle\sum_{i=1}^{r}F_i\right) \leq 8 \cdot \ln (r+1) \cdot F 
\end{equation*}\qedhere
\end{proof}

\section{Graphs of Width $r$}
To simplify the presentation, we will work with a class of graphs that is slightly more general than graphs of treewidth $r$.
\begin{definition}
\textsc{Graphs of width $r$}: A graph $G=(V,E)$ is said to have width $r$ if there exists a partition $\mathcal{S}=\{S_1,S_2,\ldots,S_m\}$ of $V$ and a rooted tree $T$ with $S_1,S_2,\ldots,S_m$ as its vertices such that $|S_i|\leq r$ for each $S_i \in \mathcal{S}$ and for each $(u,v) \in E$, there exists a $S_i,S_j \in \mathcal{S}$ such that $S_j$ is a parent of $S_i$ and $u,v \in S_i \cup S_j$.
The tree $T$ is called the width-$r$ decomposition of $G$.
\end{definition}
Notice that unlike in a tree decomposition, here the bags $S_i$ are disjoint from each other. As before, let $G=(V,E)$ be a graph with edge capacity $c:E \rightarrow \mathds{R}_{\geq 0}$, edge-length $l: E \rightarrow \mathds{R}_{\geq 0}$ and $F=\sum_{e \in E}c(e) \cdot l(e)$. It is easy to see that if $G$ has treewidth $r$, then we can construct an equivalent graph $G'=(V',E')$ with width $r+1$ as follows: for each $u \in V$, if $u$ appears in bags $B_1,B_2,\ldots,B_k$ in the tree decomposition of $G$, then we replace each appearance of $u$ in the bags by a new (distinct) vertex, ie.~$u_1,u_2,\ldots,u_k$ respectively. Furthermore, we connect $u_i \in B_i,u_j \in B_j$ by an edge of sufficiently high capacity and zero length if the bags $B_i, B_j$ are adjacent in the tree decomposition of $G$. For each $(u,v) \in E$, there exists at least one bag $B_j$ of the tree decomposition of $G$ such that $u,v \in B_j$. We add the edge $(u_j,v_j)$ to $E'$. If there are multiple bags with the edge, we add only one of the edges. It is straight forward to verify that $G'$ has width at most $r+1$ and $F=\sum_{e \in E}c(e) \cdot l(e)=\sum_{e \in E'}c(e) \cdot l(e)$.
An illustration of the transition from a tree decomposition to a width decomposition can be found in \cref{fig:treeToWidth}.
\vspace{0.3cm}

\begin{figure}[ht]
    \begin{center}
       \includegraphics[scale=0.3]{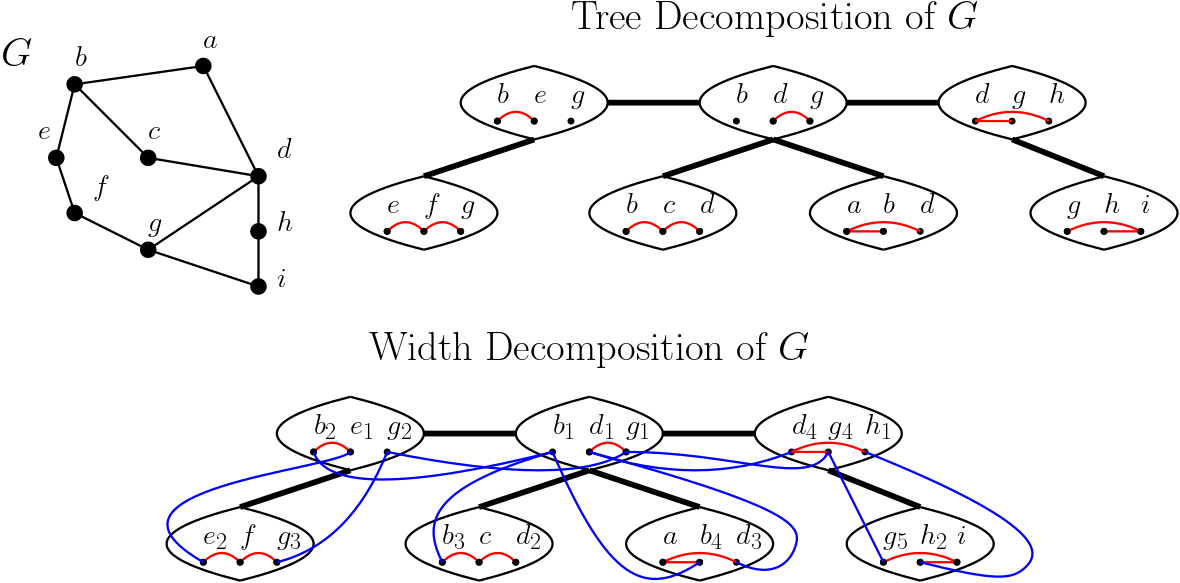} 
    \end{center}
    \caption{Transition from a tree decomposition to a width decomposition. The red edges are the original edges in $G$ and the blue edges are the added edges with length zero and high capacity.}
    \label{fig:treeToWidth}
\end{figure}

Any small diameter decomposition of $G'$ does not split copies of the same vertex in $G$ (as they are connected with edges of very high capacity), and hence, also corresponds to a small diameter decomposition for $G$ with the same cost. In the next section, we give an algorithm for computing a small diameter decomposition for a graph of width $r$.


\section{Overview of The Algorithm}

Our algorithm has three phases. In phase one, we build a collection of sets, called the $\textit{cores}$. The cores form a cover of the vertices in $G$. 
Furthermore, for every core $\core$, we have a set $\centre$ of at most $\tw$ vertices such that each vertex of $\core$ is close to at least one of the vertices in $\centre$. We refer to $\centre$ as the \textit{center} of the core. In phase two, we process the cores in a top down order w.r.t the width-$r$ decomposition.  When processing a core, we use the region growing algorithm of \cite{garg1996approximate} to pick a set of cut edges and remove these edges from the graph.
Each connected component obtained at the end of phase two has an associated core and center. Each vertex in the connected component is close in the original graph $G$ to one of the at most $\tw$ vertices of its center (though center might not be contained in the component). Finally, in phase three, we use the algorithm described in Section \ref{Section:Phase_3} for each connected component to find a small diameter decomposition. We prove the following, from which Theorem \ref{theorem: flow-cut gap} follows as well. 
\begin{theorem} \label{Theorem: main}
    Let $G=(V,E)$ be a graph of width at most $\tw$ with edge capacities $c:E \rightarrow \mathds{R}_{\geq 0}$ and edge lengths $l: E \rightarrow \mathds{R}_{\geq 0}$. Then there exists a small diameter decomposition of $G$ with cost $\O(\ln (\tw+1)) \cdot F$, where $F=\sum_{e \in E}c(e) \cdot l(e)$. Moreover, such a decomposition can be constructed in polynomial time. 
\end{theorem}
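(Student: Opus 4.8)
The plan is to implement the three-phase scheme and charge every cut edge to $F$.

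\emph{Phase 1 (cores).} Traverse the width-$\tw$ decomposition tree $T$ from the root down, maintaining a set of still-unassigned vertices. On first reaching a bag $S_i$ that still contains an unassigned vertex, open a core $\core_i$ with center $\centre_i:=S_i$, declare its members to be all still-unassigned vertices within $d_G$-distance of a fixed small constant (say $1/16$) from $S_i$, and mark them assigned. Since every vertex lies in some bag at distance $0$, the $\core_i$'s cover (indeed partition) $V$; each has an $(\le\tw)$-vertex center all of whose members are within $1/16$ of it; and the cores inherit the root-to-leaf order of $T$. We additionally flag $\core_i$ as \emph{isolated} if no member of it lies within the region that Phase 2 will grow around the center of $\core_i$'s parent core, and as \emph{processed} otherwise; the isolated cores will be handled by a separate, cheap argument and only the processed cores contribute the logarithmic factor.

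\emph{Phase 2 (region growing, top-down).} Visit the cores in an order consistent with $T$, maintaining a residual graph $G_i$ from which all regions handled so far have been deleted. On reaching $\core_i$, apply the region growing lemma in $G_i$ to the vertex set $\centre_i$ with a radius $\ra_i$ drawn from a fixed subinterval of $[1/8,1/4)$, put the edges of $\delta(\ball_{G_i}(\centre_i,\ra_i))$ into the output $E'$, and delete $\ball_{G_i}(\centre_i,\ra_i)$; whenever a deletion would remove a vertex that is still needed as (part of) the center of a not-yet-processed core, reinsert it joined to the current graph by capacity-$0$ edges whose lengths are the corresponding $d_G$-distances, exactly as in the proof of \cref{lemma:sdd_width_0}. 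The crucial structural fact is that, because every edge of a width-$\tw$ graph runs between a bag and its parent, once all ancestors of $S_i$ in the core tree have been processed the region that can be grown around $\centre_i$ is confined to the part of the subtree of $S_i$ still present; this is what makes Phase 1's assignment consistent (every member of $\core_i$ really is swallowed when $\core_i$ is processed) and guarantees that (a) at the end of Phase 2 every vertex sits inside the grown region $\hat{\core}_i$ of some core, and since $\ra_i<1/4$ and deletions only increase distance, $d_G(v,\centre_i)\le 1/4$ for every $v\in\hat{\core}_i$, while (b) every edge is charged by the region growing of at most one core. To turn (b) into the bound ``total Phase-2 cost $\le\O(\ln(\tw+1))\cdot F$'' one cannot give each core a $1/(\#\text{cores})$ share of the volume as in~\cite{garg1996approximate}, since there may be one core per bag; instead the initial volume of $\core_i$ is taken proportional to a \emph{local} volume budget spread over the $\le\tw$ vertices of $\centre_i$, in the spirit of \cref{lemma:sdd_width_0}, and one shows that with this choice each multiplicative $\ln(\cdot)$ term stays $\O(\ln(\tw+1))$ while the initial volumes telescope down the core tree to $\O(F)$.

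\emph{Phase 3 (inside each piece).} After Phase 2 every connected component $C$ of $(V,E\setminus E')$ lies inside some $\hat{\core}_i$, hence comes with a center $\centre_i$ of at most $\tw$ vertices such that $d_G(v,\centre_i)\le 1/4$ for all $v\in C$. Apply \cref{lemma:sdd_width_0} not to $C$ but to the auxiliary graph formed from $C$ by re-adding the vertices of $\centre_i$ (if absent) and joining each of them to every vertex of $C$ by a capacity-$0$ edge of length equal to the corresponding $d_G$-distance. In this auxiliary graph the hypothesis of \cref{lemma:sdd_width_0} holds, its $F$-value equals $\sum_{e\in C}c(e)l(e)$ (the new edges have zero capacity), and replacing an added edge by the corresponding shortest path of $G$ shows $d_G(u,v)\le d_{\mathrm{aux}}(u,v)$ for $u,v\in C$; hence a small diameter decomposition of the auxiliary graph restricts to a set of edges of $C$ whose removal leaves components of $d_G$-diameter strictly below $1$, at cost at most $\tfrac18\ln(\tw+1)\sum_{e\in C}c(e)l(e)$. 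Summing over the edge-disjoint components, Phase 3 costs at most $\tfrac18\ln(\tw+1)\cdot F$. Combining the three phases, $E'$ is a small diameter decomposition of $G$ — a component of $(V,E\setminus E')$ lies inside one Phase-2 component and inside one of its $d_G$-diameter-$<1$ sub-pieces — of total cost $\O(\ln(\tw+1))\cdot F$, and everything is polynomial time, which is \cref{Theorem: main}; \cref{theorem: flow-cut gap} follows by passing from a treewidth-$r$ graph to a width-$(r+1)$ graph as described earlier, setting $l$ to an optimal solution of the multicut LP, and invoking LP duality. The main obstacle is item (b)/the cost bound in Phase 2: vanilla region growing loses $\Theta(\ln(\#\text{cores}+1))=\Omega(\ln n)$, and the heart of the argument is to replace this by $\O(\ln(\tw+1))$ by exploiting that a width-$\tw$ graph is a tree of $\tw$-vertex bags, so that each piece is dominated by $\tw$ vertices and only $\O(\tw)$ ``balls'' ever compete locally — together with making Phase 1's cover consistent with Phase 2's deletions, which is what the isolated/processed dichotomy and the subtree-confinement fact are for.
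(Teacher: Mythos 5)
Your three-phase architecture matches the paper's, and your Phase 3 is essentially the paper's Phase 3 (auxiliary graph with capacity-$0$, $d_G$-length edges to the center, then \cref{lemma:sdd_width_0}). But there is a genuine gap exactly where you yourself locate ``the heart of the argument'': the Phase 2 cost bound. Your claim (b), that ``every edge is charged by the region growing of at most one core,'' is true only for the edges inside the ball of radius $\ra_i$ that actually gets deleted; it is false (or at least entirely unjustified) for the volume that enters the region growing bound, namely the ball of radius $b-a$ in the residual graph. Edges lying in the annulus between radius $\ra_i$ and $b-a$ survive and can be counted again in the volume of later cores, and without further structure the number of repetitions is not bounded by any function of $\tw$ alone. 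Consequently, both the logarithmic ratio $\ln\bigl(\vol(\cdot,b-a)/\vol(\cdot,0)\bigr)$ and the sum of initial volumes over all cores need a quantitative overlap bound; your statement that ``the initial volumes telescope down the core tree to $\O(F)$'' is an assertion, not an argument. The paper's entire Section 9 exists to supply precisely this: \cref{lem:edgeVolbound} shows each edge appears in $\ball_{G_i}(\core'_i,b-a)$ for at most $2r^3+2r$ cores, via the shadow/shadow-domain analysis (\cref{lem:vertexshadowbound} and its supporting claims), and \cref{lemma:phase_2_bound_weight} then sets the initial volume to $\vol'_{G_i}(\core'_i,b-a)/(2r^3+2r)$ to get the $\O(\ln(\tw+1))$ factor.

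A second, related problem is that your simplified Phase 1 removes the very structure that the paper's overlap bound relies on. The paper grows cores in up to $\tw$ iterations with ranks, same-rank cores disjoint, attachments that let later cores grow through earlier ones, and centers that are the \emph{uncovered} part of a bag (so that non-center vertices of a center-bag have strictly smaller rank, \cref{lem:noncentervertices}); these facts drive the path-intersection arguments in the shadow counting. With your single-pass construction (center $=$ whole bag, cores partitioning $V$ by a $1/16$-ball in $d_G$), none of these lemmas are available, and it is also not clear that every member of a core is ``swallowed'' when its core is processed, since deletions by ancestor regions can disconnect or stretch distances in the residual graph; your reinsertion of centers with capacity-$0$ edges of length $d_G$ to all remaining vertices would effectively grow balls with respect to $d_G$ rather than the residual metric, which destroys the disjointness on which the volume charging rests. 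So the proposal is a correct outline of the paper's strategy, but the decisive combinatorial lemma bounding per-edge volume contributions (and a Phase 1 designed to make it provable) is missing, and filling it is the main content of the paper's proof.
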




\section{The Algorithm}
\begin{figure}[ht] \label{definitions}
\begin{center}
   \includegraphics[scale=0.5]{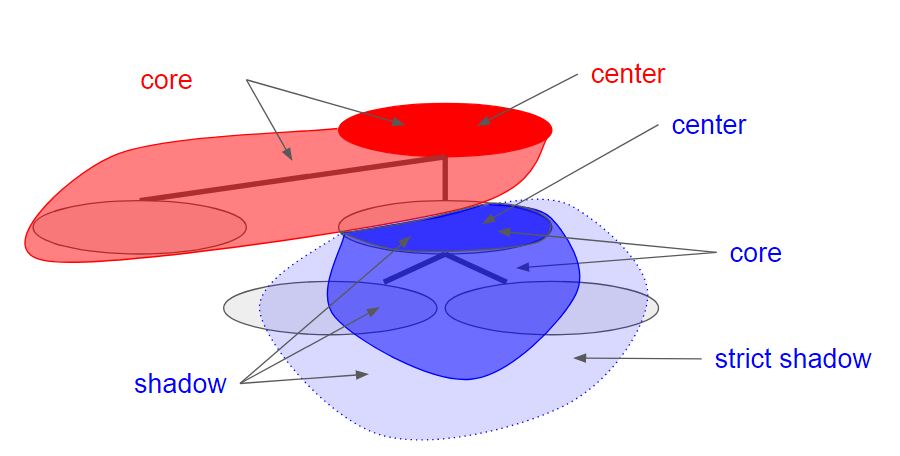} 
\end{center}
\caption{An illustration of the terms used in the algorithm and its analysis.}
\end{figure}
Let $G=(V,E)$ be a graph with edge-capacity $c:E \rightarrow \mathds{R}_{\geq 0}$ and edge-length $l: E \rightarrow \mathds{R}_{\geq 0}$. 
We also have two parameters $a,b \in \mathbb{R}_{\geq 0}$ with $b > a$ for the algorithm.
They will in fact be fixed to $a=1/8$ and $b=1/4$ later.

Let $T(G)$ be the width-$r$ decomposition of $G$.
We may omit the brackets and use just $T$.
We define the level of a bag in $T(G)$ to be its hop-distance from the root of $T(G)$.
For a bag $B$ of a subtree $T'$ of $T(G)$, we define $T'_B$ to be the subtree of $T'$ rooted at $B$.
For a subtree $T'$ of $T(G)$, we define
$V[T']$ to be the union of all bags in $T'$.


\medskip
\textbf{Phase 1. Growing Cores}: The first phase of the algorithm outputs a set $\coreSet$ of subsets of the vertex set $V$ whose union covers $V$. Each set in $\coreSet$ is called a \emph{core}.

During the algorithm, we say that a vertex is \emph{covered} if it is part of at least one core constructed so far.
For a $V'\subseteq V$, we use $\uncovset(V')$ to denote the set of uncovered vertices of $V'$.
Also, we say that a bag of $T(G)$ is \emph{covered} if all the vertices in the bag are covered. Similarly, a bag is \uncov if one of its vertices is uncovered.

We associate with each bag $B$ of $T(G)$ an attachment $A(B)\subseteq V$.
For a subtree $T'$ of $T(G)$ we use $A[T']$ to denote the union of attachments of all bags of $T'$.

\emph{Initialization:} We initialize the set of cores $\coreSet$ to $\emptyset$. 
The attachment of each bag of $T(G)$ is initialized to $\emptyset$.

We proceed in 
\emph{iterations} during Phase 1 until all vertices are covered. During each iteration we process one by one each connected component $T'$ of the forest induced on $T(G)$ by the uncovered bags of $T(G)$ (note that $T'$ gives a rooted subtree of $T(G)$).
We process each such $T'$ in a top-down manner as follows (see \cref{fig:phase1}):
\begin{enumerate}
    \item Mark all bags of $T'$ as \emph{unvisited}.
    
    \item While there is an unvisited bag in $T'$:
    \begin{enumerate}
        \item Pick an unvisited bag $B$ of $T'$ of the smallest level (breaking ties arbitrarily).
        \item In the graph induced in $G$ by $
        \uncovset(V[T'_B])
        \cup A[T'_B]$ 
       we pick the ball $\ball(\uncovset(B),a)$ as a new \emph{core} $\core$ into $\coreSet$.
       \item Mark all the bags of $T'$ that intersects $\core$ as \emph{visited}.
       \item If $B$ is not the root of $T'$ then add $\core$ to the attachment of the parent bag of $B$.
    \end{enumerate}
\end{enumerate}

This finishes the Phase 1 algorithm. The 
\emph{center} of a core is defined as the center from which the ball defining the core was picked during Phase 1.
The following lemmas follow directly from the construction.

\begin{lemma}
Every vertex $v$ of $G$ is in at least one core in $\coreSet$.
\end{lemma}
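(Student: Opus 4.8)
The plan is to argue by contradiction: suppose some vertex $v$ is never covered, i.e. it is not contained in any core at the end of Phase 1. Since Phase 1 runs in iterations until all vertices are covered, this would mean the process never terminates, so I should really argue that each iteration makes progress and that, in particular, every \uncov bag is touched during the iteration in which it is still uncovered. The cleanest route is to show the following invariant for a single iteration: after processing a connected component $T'$ of the forest of uncovered bags, every bag of $T'$ has been marked \emph{visited}, and a bag is marked visited only when it intersects some core just added to $\coreSet$. Hence every bag of $T'$ gains at least one covered vertex, and in particular the root bag of $T'$ (or more carefully, the bag in which a given uncovered vertex currently lives) eventually has its uncovered vertices placed into a core.

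The key steps, in order. First I would observe that in step 2(a)–(b) of processing $T'$, when we pick an unvisited bag $B$ of smallest level, the set $\uncovset(B)$ is nonempty — because $B$ is a bag of $T'$, hence an uncovered bag, hence contains at least one uncovered vertex — so the ball $\ball(\uncovset(B), a)$ is a well-defined nonempty set containing all of $\uncovset(B)$, and this ball becomes a core $\core$. Therefore every uncovered vertex of $B$ becomes covered the moment $B$ is selected. Second, I would note that the while-loop in step 2 terminates, since step 2(c) marks $B$ itself visited (as $B$ trivially intersects $\core \supseteq \uncovset(B) \neq \emptyset$), so the number of unvisited bags strictly decreases each iteration of the loop; thus after finitely many steps every bag of $T'$ is visited. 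Since every bag of $T'$ is either selected as some $B$ (and then its uncovered vertices are covered) or marked visited because it intersects a core added while processing some other bag (and then it too gains a covered vertex), after the iteration every bag of $T'$ contains strictly more covered vertices than before — in particular at least one. Third, I would conclude the global termination/covering claim: consider any vertex $v$ and let $B$ be the (unique, since bags partition $V$) bag containing $v$. As long as $v$ is uncovered, $B$ is an uncovered bag and so lies in some component $T'$ of the uncovered-bag forest in the next iteration; during that iteration $B$ is processed (selected at some point in the while-loop, since the loop only ends when no unvisited bag remains, and $B$ becomes visited only via step 2(c) or 2(b)). If $B$ is selected as the smallest-level unvisited bag in step 2(a), then in step 2(b) we add the core $\ball(\uncovset(B),a) \ni v$, covering $v$. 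If instead $B$ is marked visited in step 2(c) because it intersects a previously added core $\core$, then $v$ may not yet be in that core, but then $\uncovset(B)$ has shrunk; iterating this, since $|B| \le r$ is finite and each iteration of Phase 1 strictly reduces the number of uncovered vertices in $B$ (as shown above), after at most $r$ iterations $B$ has no uncovered vertices, so $v$ is covered.

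The main obstacle I anticipate is the subtlety in step 2(c): a bag can be marked \emph{visited} without all of its vertices being put into the freshly created core — it only needs to \emph{intersect} the core. So "visited" does not immediately imply "covered". The resolution, as above, is the monotonicity argument: each iteration of Phase 1 covers at least one new vertex of each still-uncovered bag (the intersection in step 2(c) is nonempty by definition, and the selected bag in step 2(b) has all its uncovered vertices covered), combined with the finiteness $|B| \le r$ to guarantee that after boundedly many iterations the bag — and hence $v$ — is fully covered. One should also double-check that the forest of uncovered bags is nonempty whenever some vertex is uncovered, which is immediate since the bag containing an uncovered vertex is itself an uncovered bag. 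I would keep the write-up short, since once the invariant "every bag of $T'$ gets marked visited, and visiting a bag coincides with adding a covered vertex to it" is stated, the conclusion is essentially immediate.
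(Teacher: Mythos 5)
Your route is essentially the paper's: the paper treats this lemma as immediate from the construction (Phase~1 by definition iterates until every vertex is covered, with termination supplied by the progress argument that appears as \cref{lem:numberiterations}), and your write-up is an expanded version of exactly that. One sub-claim, however, is not justified as you state it. You argue that whenever a bag of $T'$ is marked \emph{visited} in step 2(c) it gains a newly covered vertex, and hence that $\uncovset(B)$ strictly shrinks in every iteration in which $B$ is still uncovered. But being marked visited only means the bag \emph{intersects} the new core, and a core is grown in the graph induced by $\uncovset(V[T'_B])\cup A[T'_B]$, so it may contain already-covered attachment vertices (this is precisely the point of the attachments: a later core may grow into an earlier one). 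A bag can therefore be marked visited because the core meets it only in such already-covered vertices, in which case its set of uncovered vertices does not shrink in that iteration. So the step ``each iteration of Phase 1 strictly reduces the number of uncovered vertices in $B$'' is not established by ``the intersection in step 2(c) is nonempty''; this is the same assertion made in the paper's proof of \cref{lem:numberiterations}, but it does not follow from visitation alone.

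For the present lemma the per-bag claim is not needed, and your own first observation already suffices: when a bag $B$ is selected in step 2(a)--(b), its uncovered set is nonempty (an uncovered vertex of a bag of $T'$ can only be covered during the current iteration by a core grown while processing $T'$, and any such core intersects $B$ and would have marked it visited) and all of $\uncovset(B)$ is placed into the new core. Since every nonempty component of uncovered bags has at least one bag selected per iteration, the total number of uncovered vertices strictly decreases each iteration, so Phase~1 terminates; and because its stopping condition is exactly that all vertices are covered, every vertex ends up in some core. Either argue it this way, or simply invoke \cref{lem:numberiterations} for termination instead of re-deriving the per-bag strict decrease.
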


\begin{lemma}
The center of each core is contained in some bag of $T(G)$. Also, each bag contains the center of at most one core.
\end{lemma}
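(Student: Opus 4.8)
The plan is to read both statements directly off the description of Phase~1. For the first statement, recall that the center of a core $\core$ created while processing a bag $B$ is, by definition, the set $\uncovset(B)$ from which the ball $\core=\ball(\uncovset(B),a)$ was grown. Since $\uncovset(B)\subseteq B$ and $B$ is a bag of $T(G)$, the center is contained in a bag of $T(G)$; in particular $|\uncovset(B)|\le|B|\le\tw$, so it has at most $\tw$ vertices, consistent with the role of a center in the overview.

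For the second statement I would first establish the invariant that \emph{once a bag $B$ has been chosen as the smallest-level unvisited bag in some iteration (and thus spawned a core), $B$ is \cov and remains \cov for the rest of Phase~1.} Indeed, every vertex lies at distance $0$ from itself, and $a\ge 0$, so $\uncovset(B)\subseteq\ball(\uncovset(B),a)=\core$; hence every previously uncovered vertex of $B$ now belongs to the core $\core\in\coreSet$ and is covered, while the remaining vertices of $B$ were already covered. Thus $B$ is covered immediately after it is processed, and since cores are only ever added to $\coreSet$ and never removed, no covered vertex becomes uncovered again, so $B$ stays covered. One boundary case to note is that $\uncovset(B)\ne\emptyset$ whenever $B$ is processed: the subtree $T'$ in which $B$ is processed is a connected component of the forest induced on $T(G)$ by the \uncov bags, so $B$ has at least one uncovered vertex; this makes the displayed containment meaningful and also guarantees $B\cap\core\supseteq\uncovset(B)\ne\emptyset$.

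It then remains to turn this invariant into the claimed uniqueness. A covered bag is never again a vertex of the forest of uncovered bags that Phase~1 iterates over, so after $B$ has spawned a core it is never chosen again in a later iteration; and within the single iteration in which it is chosen, the step that marks all bags intersecting $\core$ as visited marks $B$ (since $B\cap\core\ne\emptyset$), so it is not chosen again in that iteration either. Hence over the whole run of Phase~1 each bag spawns at most one core. Finally, because the bags of a width-$r$ decomposition are pairwise disjoint, the nonempty center $\uncovset(B)$ of a core is a subset of exactly one bag, namely the bag $B$ that spawned it; combining this with the previous sentence gives that each bag contains the center of at most one core. I do not anticipate a genuine obstacle here: the whole argument is bookkeeping around the two invariants ``coveredness is monotone'' and ``the bags are pairwise disjoint,'' the only mild subtlety being the $\uncovset(B)\ne\emptyset$ point above.
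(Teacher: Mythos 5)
Your proof is correct and matches the paper's intent: the paper states this lemma as following directly from the Phase~1 construction, and your argument is exactly that direct reading (the center is $\uncovset(B)\subseteq B$, a processed bag becomes covered and visited, coveredness is monotone, and bags of a width-$r$ decomposition are pairwise disjoint). The only nitpick is that your justification of $\uncovset(B)\neq\emptyset$ cites the state at the start of the iteration, whereas strictly one should add that if any vertex of $B$ had been covered mid-iteration by an earlier core of the same component, $B$ would already be marked visited and hence never picked --- a point your own bookkeeping already supplies.
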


\begin{lemma}
    \label{lemma::coresDisjointRank}
    The cores constructed in the same iteration are vertex disjoint.  
\end{lemma}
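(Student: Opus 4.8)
I would prove the equivalent statement, by induction on the iteration number, that every vertex of $G$ lies in at most one core created during any single iteration (each core contains its center $\uncovset(B)\neq\emptyset$, so this is the same as saying the cores of one iteration are pairwise vertex-disjoint). Fix an iteration $t$, assume the statement for all earlier iterations, and let $\core_1,\core_2,\dots$ be the cores created during iteration $t$ in order of creation, where $\core_\ell$ is picked while processing a bag $B_\ell$ of some component $T^{(\ell)}$ of the forest induced by the uncovered bags.

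First I would record the bookkeeping facts. (i) $\core_\ell$ is a ball inside the subgraph of $G$ induced on $\uncovset(V[T^{(\ell)}_{B_\ell}])\cup A[T^{(\ell)}_{B_\ell}]$, so every vertex of $\core_\ell$ is either uncovered at the moment $B_\ell$ is processed or lies in $A[T^{(\ell)}_{B_\ell}]$. (ii) Inside one component the processed bags have non-decreasing level — we always take an unvisited bag of smallest level, and marking bags visited cannot lower the current smallest level — and a core picked at a bag $B$ is appended only to the attachment of its parent, whose level is $\mathrm{level}(B)-1$. Hence a core created earlier in iteration $t$ is never appended to the attachment of a bag lying inside a subtree $T^{(\ell)}_{B_\ell}$ processed later in iteration $t$ (such a subtree contains only bags of level $\geq \mathrm{level}(B_\ell)$, and attachments never move between components), so $A[T^{(\ell)}_{B_\ell}]$ consists solely of cores created in iterations strictly before $t$. (iii) The vertices of $\core_\ell$ are covered from the instant $\core_\ell$ is created, and being covered is monotone.

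Now assume $v\in\core_\ell\cap\core_{\ell'}$ with $\ell<\ell'$. If $v$ is uncovered at the start of iteration $t$, then by (iii) it is covered once $\core_\ell$ is created, hence it is not in $\uncovset(V[T^{(\ell')}_{B_{\ell'}}])$ when $B_{\ell'}$ is processed; by (i) and (ii) it must then lie in a core from an earlier iteration, contradicting that $v$ was uncovered when iteration $t$ began. So $v$ is already covered at the start of iteration $t$, and by (i) and (ii) it lies in $\core_\ell$ and in $\core_{\ell'}$ only through the attachment part — that is, through cores created in earlier iterations that contain $v$. It remains to rule out that this can put $v$ into two distinct cores of iteration $t$.

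For this final step I would track the provenance of $v$. Each time a core absorbs $v$ through an attachment, that attachment was created by picking an earlier core $\core'\ni v$ at a child of the parent of the bag at which the absorbing core was picked; following this back shows that the bags carrying the successive cores through which $v$ propagates form an ascending chain (each an ancestor of the previous) and hence lie on a single root-to-leaf path of $T(G)$ — the chain being well defined since, by the inductive hypothesis, $v$ lies in at most one core per earlier iteration. At iteration $t$ only the still-uncovered bags of this chain can make $v$ visible, and these lie in a single component $T^{(i^\ast)}$; processing $T^{(i^\ast)}$ top-down and marking the met bags visited after each core is created, one argues that any later center reaching $v$ would have to meet the already-claimed portion of the chain, so at most one core of $T^{(i^\ast)}$ — and no core of any other component — absorbs $v$, contradicting $v\in\core_\ell\cap\core_{\ell'}$. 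This last step is the main obstacle: it hinges on the ``single root-to-leaf path'' invariant for the bags carrying the old cores of a fixed vertex and on the fact that top-down region growing with visited-marking lets at most one current center reach such a vertex; the level/ordering bookkeeping of (ii) and the fresh-vertex case are routine.
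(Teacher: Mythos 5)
Your setup is sound as far as it goes: the observation that within one iteration the attachment sets $A[T'_B]$ used by later-processed bags contain only cores from strictly earlier iterations (your fact (ii)), and the resulting ``fresh vertex'' case, are correct and essentially routine. The problem is that the only non-trivial part of the lemma is precisely the case you leave open, namely that a vertex $v$ that is already covered from earlier iterations cannot be absorbed, via attachments, into two distinct cores of the current iteration. You yourself flag this as ``the main obstacle,'' and what you write for it is an assertion, not a proof. Moreover, the sketch contains concrete inaccuracies: (a) your provenance statement misdescribes the construction --- a core picked at bag $B$ absorbs $v$ through an old core attached to some bag $C$ lying in the subtree $T'_B$ (so that old core was created at a child of a \emph{descendant} of $B$), not ``at a child of the parent of the bag at which the absorbing core was picked''; (b) the claim that the still-uncovered bags of the relevant chain ``lie in a single component'' of the uncovered forest is unjustified --- uncovered bags along a root-to-leaf path need not be contiguous, so a priori two different components, hence two different current cores, could each see an old core containing $v$ through their own attachment bags; and (c) ``any later center reaching $v$ would have to meet the already-claimed portion of the chain'' conflates a ball passing through a bag with that bag having been marked visited: visited-marking only prevents a bag from being \emph{chosen as a center}, it does not block a ball from growing through it, and the marks are reset for each component, so they give nothing in the cross-component situation.

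What is missing is the structural ingredient that would actually close the case: since every edge of $G$ joins two vertices in the same bag or in adjacent bags of the width decomposition, any ball grown from $\uncovset(B)$ that contains $v$ must contain a vertex of \emph{every} bag on the tree path from $B$ to the bag containing $v$ (take a shortest path realizing $d(\uncovset(B),v)$; all its vertices lie in the ball, and their bags form a walk in $T(G)$ that must traverse that tree path). With this, if two same-iteration centers $B$ and $B'$ both reached $v$ and lay in the same component, the earlier core would intersect the later center-bag and have marked it visited, a contradiction; the case of centers in different components still needs a separate argument, which your proposal does not even identify as a case. (For comparison, the paper offers no written proof --- it states the lemma follows directly from the construction --- but your attempt, as written, does not establish the covered-vertex case, so the statement is not proved.)
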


We call the bag containing the center of a core to be its \emph{center-bag}. We now bound the number of iterations of Phase 1.
%
\begin{figure}[h]\label{shadows}
\centering
\begin{subfigure}{.5\textwidth}
  \centering
  \includegraphics[width=1\linewidth]{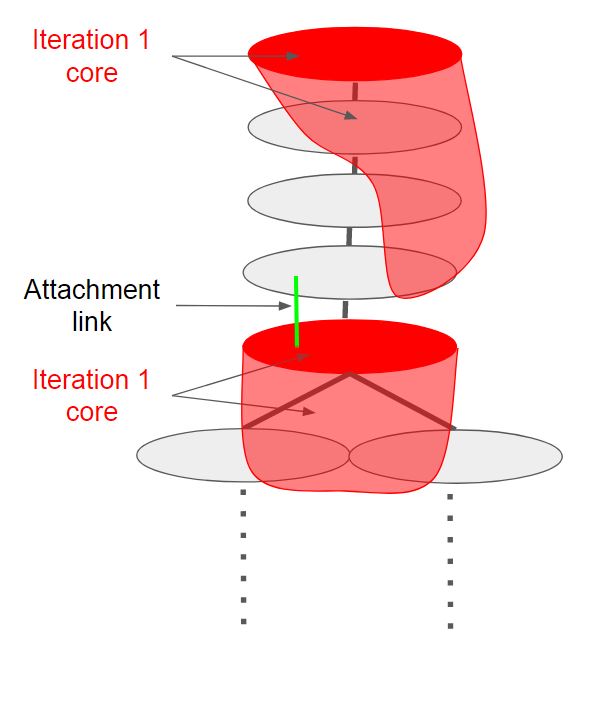}
  \caption{}
  \label{fig:sub1}
\end{subfigure}%
\begin{subfigure}{.5\textwidth}
  \centering
  \includegraphics[width=1\linewidth]{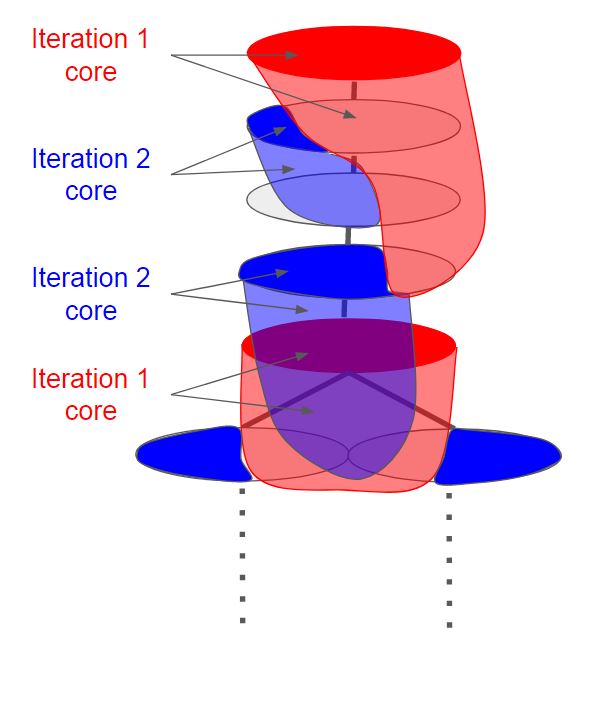}
  \caption{}
  \label{fig:sub2}
\end{subfigure}
\caption{An illustration of two iterations of Phase 1. Note that the attachment links allow a core of iteration 2 to grow within a core of iteration 1.}
\label{fig:phase1}
\end{figure}


\begin{lemma}
The Phase 1 of the algorithm has at most \tw iterations.
\label{lem:numberiterations}
\end{lemma}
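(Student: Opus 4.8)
The plan is to show that \emph{every} bag that is still uncovered at the start of an iteration loses at least one uncovered vertex during that iteration. Since a bag has at most $\tw$ vertices and coverage is monotone (a vertex placed into a core stays covered forever), a bag can then be uncovered at the start of at most $\tw$ iterations, and as the bags partition $V$ this gives the claimed bound. Concretely, for a bag $B$ of $T(G)$ and an iteration $j$ write $\uncovset_j(B)$ for the set of vertices of $B$ uncovered at the start of iteration $j$, so $\uncovset_1(B)=B$, the sequence $\uncovset_j(B)$ is non-increasing in $j$, and Phase~1 stops once all of them are empty. I would prove the key statement: if $\uncovset_j(B)\ne\emptyset$ then at least one vertex of $\uncovset_j(B)$ becomes covered during iteration~$j$, i.e.\ $|\uncovset_{j+1}(B)|\le|\uncovset_j(B)|-1$. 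Granting this, applying it along iterations $1,\dots,j-1$ gives $|\uncovset_j(B)|\le|B|-(j-1)$ whenever $B$ is uncovered at the start of iteration $j$; since the left side is then $\ge 1$, we get $j\le|B|\le\tw$, so at the start of iteration $\tw+1$ every bag is covered and Phase~1 has performed at most $\tw$ iterations.

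To prove the key statement, fix such a $B$ and $j$. Then $B$ lies in some connected component $T'$ (a rooted subtree of $T(G)$) of the forest of uncovered bags processed in iteration~$j$, and the inner while-loop runs until every bag of $T'$, in particular $B$, has been marked visited. If $B$ is itself picked, we add the core $\ball(\uncovset(B),a)\supseteq\uncovset(B)$; moreover no vertex of $B$ can have been covered earlier in iteration~$j$, since that would have been done by some core built earlier in the iteration, which would already have marked $B$ visited; hence $\uncovset(B)=\uncovset_j(B)$ at that moment and all of it is covered. Otherwise $B$ is marked visited because $B\cap\core'\ne\emptyset$ for some core $\core'$ grown earlier in iteration~$j$ from a picked bag $X\ne B$ inside $G[\uncovset(V[T'_X])\cup A[T'_X]]$, and the task reduces to showing that $\core'$ meets $\uncovset_j(B)$ and not merely $B$ — which a priori it could touch at an already-covered vertex of $B$ that re-entered the picture through an attachment link from an earlier iteration.

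For this last point I would prove, by induction on $j$, the strengthened statement: any core $\core$ built in iteration~$j$ which intersects a bag $B$ with $\uncovset_j(B)\ne\emptyset$ already intersects $\uncovset_j(B)$. The base case $j=1$ is clean, because no attachment is ever consulted within the iteration that creates it: a core built during an iteration is attached only to the parent of its center-bag, a bag of strictly smaller level (hence already visited), and since picked levels are non-decreasing no later pick inside that iteration has that bag in its subtree. Thus every core built in iteration~$1$ consists of currently-uncovered vertices, and disjointness of the bags of the width decomposition forces $\core\cap B\subseteq\uncovset_1(B)$. For $j\ge 2$, take $v\in\core\cap B$: if $v$ is a currently-uncovered vertex of the subtree $T'_X$, then disjointness of bags makes it a vertex of $\uncovset_j(B)$ and we are done; otherwise $v$ entered the attachment $A(C)$ of some $C\in T'_X$ during an earlier iteration $j'<j$, as part of a core $\core^*$ grown from a child $D$ of $C$, and since coverage is monotone $B$ is also uncovered at the start of iteration~$j'$, so the inductive hypothesis applied to $\core^*$ yields a vertex of $\uncovset_{j'}(B)$ inside $\core^*$. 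One then has to exploit the tree relations $X\preceq C\prec D\preceq B$ — which force $B\in T'_X$ and place all of $\core^*$ inside the graph $G[\uncovset(V[T'_X])\cup A[T'_X]]$ in which $\core$ is grown — together with the ball structure of $\core$ and $\core^*$, to conclude that $\core$ itself reaches a vertex of $\uncovset_j(B)$. I expect this inductive step to be the main obstacle: it is exactly the point where one must pin down what the attachment sets can contain and rule out a core that merely grazes a bag at stale attachment vertices while missing every genuinely-uncovered vertex of it, and it is here that both the disjointness of the bags in the width-$\tw$ decomposition and the precise rule for forming attachments are essential; the rest (the accounting with $\uncovset_j(\cdot)$ and the level/subtree bookkeeping) is routine.
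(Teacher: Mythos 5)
Your accounting framework is the same as the paper's: argue that every bag which is uncovered at the start of an iteration has at least one of its vertices newly covered during that iteration, so no bag can remain uncovered for more than $\tw$ iterations. The paper asserts this per-bag progress in one line; you correctly identify the only place where it genuinely needs an argument, namely a bag $C$ of the processed component $T'$ that is marked visited because a core grown at some other picked bag $X$ intersects $C$, where a priori the intersection could consist solely of already-covered vertices coming from attachments of earlier iterations. However, your proposal does not close this point: the strengthened inductive statement (``a core of iteration $j$ that meets an uncovered bag meets its uncovered part'') is exactly the crux, and you explicitly leave its inductive step as the expected ``main obstacle,'' sketching only that one ``has to exploit'' the tree relations together with the ball structure of the two cores. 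As written this is a conjecture, not a proof, so the argument is incomplete precisely at its load-bearing step. Moreover, the direction you indicate---a metric argument forcing the new ball to reach a vertex of $\uncovset_j(C)$ because it reaches a stale vertex of $C$---is unlikely to work: nothing about the radii prevents a ball from reaching a covered attachment vertex of $C$ while every uncovered vertex of $C$ stays far away.

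What actually rescues the claim is a separation argument rather than a distance argument. Every core is contained in the union of the bags of the subtree of $T(G)$ rooted at its center-bag (immediate induction, since attachments of a bag are cores grown at its children). Now suppose the core grown at $X$ contains a covered vertex $v$ of some bag $C$. Then $v\in A[T'_X]$, so $v$ lies in a core grown in a strictly earlier iteration at a bag $B^*$ that is a child of some bag of $T'_X$ (your level/component observation rules out same-iteration attachments). Since $B^*$ was picked, it is fully covered from that iteration on; since that old core lives in the subtree rooted at $B^*$ and bags are disjoint, $C$ is a strict descendant of $B^*$, while $B^*$ is a strict descendant of $X$. Hence the tree path from $X$ to $C$ passes through the fully covered bag $B^*$, so $C$ cannot lie in the same uncovered component $T'$ as $X$ at all. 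Consequently a core can intersect a bag of its own component only in uncovered vertices, every bag of $T'$ that gets visited (whether picked or marked) gains a newly covered vertex in that iteration, and the bound of $\tw$ iterations follows. So your plan matches the paper's, but its decisive step is missing, and closing it needs this structural observation rather than the iteration-wise induction with ball estimates that you propose.
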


\begin{proof}
At each iteration of the Phase 1, at least one vertex per bag of each uncovered bag of $T(G)$ gets $\cov$. Since each bag has no more than $\tw$ elements, the algorithm can not have more than $\tw$ iterations.
\end{proof}
\medskip
\textbf{Phase 2. Growing Components}: 
The goal of the second phase is to partition the graph $G$ into components
$\comps$ such that each component $\comp\in \comps$ has a set $\centre$ of $r$ vertices in the original graph (not necessarily in $\comp$) called its center such that each vertex in $\comp$ is at a distance of at most $b$ in $G$ from $\centre$.
The components will be such that the capacity of edges going across the components is small.

We will process the cores in $\coreSet$ in a specific top-down order. While processing a core, we will grow a ball with the core as center, and pick this as one component. 
The center of each component will be in fact the center of the core from which it was grown.
The capacity of the edges going across the components can be bounded as we will pick the radius of the ball as given by the region growing lemma.

During the second phase algorithm, we will process the cores in a top-down manner. That is, out of all the unprocessed cores we pick the one whose center-bag is in the smallest level in $T(G)$ (breaking ties arbitrarily) to be processed next. Let $\coreSet=\{\core_1,\ldots,\core_p\}$ be the cores in this top-down order.
Each core $\core_i$ is processed as follows:

\smallskip
    We pick a new component $\comp_i$ grown from $\core_i$ as follows: 
    let $G_i:=G_{i-1}-\comp_{i-1}$ where $G_1=G$.
    Also, let $R'_i$ be defined as $R_i-\comp_1-\ldots -\comp_{i-1}$.
    If $R'_i$ is isolated in $G_i$ then take $\comp_i:=R'_i$. Otherwise take $\comp_i$ as the ball $\ball_{G_i}(R'_i,\ra_i)$ where $\ra_i\in [0,b-a)$ is given by the region growing lemma (see \cref{equation::estimateCost}) such that   
    \begin{align}
    C_{G_i}(\core'_i,\ra_i)  \leq  \dfrac{1}{b-a} \cdot \ln  \left( \dfrac{\vol_{G_i}(\core'_i,b-a)}{\vol(\core'_i,0)} \right) \cdot \vol_{G_i}(\core'_i,\ra_i),
    \label{eq:phase2}
    \end{align}
    where the initial volume $\vol(\core'_i, 0)$ is set to be $\vol'_{G_i}(\core'_i,b-a)/h$ where $h=2r^3+2r$.
\smallskip


%
%

That concludes the processing of a core and also concludes the Phase 2 algorithm.
Let $\comps=\{\comp_1,\ldots,\comp_p\}$ denote the set of components. 
Let $X_2$ be the set of edges of $G$ going across the components in $\comps$.
We say that $X_2$ is the set of cut-edges picked in Phase 2.
In \cref{section::boundWeight}, we will bound their total capacity to be logarithmic in the width of the graph.
An important property that is used for this is that if $b=2a$ each edge contributes to the volume $\vol'_{G_i}(R'_i,b-a)$ of at most $\O\left(r^3\right)$ many cores. 
This property will be proved in \cref{sec:shadow}.
We now state the required \emph{near to center} property of components, which follows just by construction.
\begin{lemma}
\label{lem:nearcentre}
Let $\comp$ be the component constructed during the processing of core $\core$.
Let $\centre$ be the center of $\core$. All vertices in $\comp$ are at a distance of at most $b$ from $Y$ in $G$.
\end{lemma}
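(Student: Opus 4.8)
The plan is to unwind the Phase~1 and Phase~2 definitions and chain two elementary distance bounds, the only point requiring care being that every ball in the construction is grown inside a \emph{vertex-induced} subgraph of $G$; since such a subgraph inherits its edge lengths from $G$, a shortest path in it is also a path of the same length in $G$, and hence distances in those subgraphs can only be larger than the corresponding distances in $G$ (not smaller).

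First I would record what Phase~1 gives us. The core $\core$ was added to $\coreSet$ as the ball $\ball(\uncovset(B),a)$ computed in the graph $H$ induced in $G$ by $\uncovset(V[T'_B])\cup A[T'_B]$, and, by the definition of center, $\centre=\uncovset(B)$, which lies in $V(H)$. For any $u\in\core$ there is a path of length at most $a$ from $u$ to $\centre$ inside $H$; as $H$ is an induced subgraph of $G$, this is also a path in $G$, so $d_G(u,\centre)\le a$. In particular this holds for every vertex of the set $\core'$ used in Phase~2 — that is, $\core$ with the vertices of all previously processed components deleted — since $\core'\subseteq\core$.

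Then I would split on the two cases of Phase~2. Let $G'$ denote the graph in which $\comp$ is grown, i.e.\ $G$ with the earlier components deleted; being obtained from $G$ by deleting vertex sets, $G'$ is a vertex-induced subgraph of $G$. If $\core'$ is isolated in $G'$, then $\comp=\core'\subseteq\core$, so every $v\in\comp$ satisfies $d_G(v,\centre)\le a\le b$, using $b>a$. Otherwise $\comp=\ball_{G'}(\core',\ra)$ with $\ra\in[0,b-a)$; then for $v\in\comp$ there is a vertex $u\in\core'$ joined to $v$ by a $G'$-path of length at most $\ra$, which is also a $G$-path, so $d_G(v,\core')\le\ra<b-a$. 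Taking such a $u$ together with a point $y\in\centre$ nearest to $u$ in $G$, the triangle inequality gives $d_G(v,\centre)\le d_G(v,u)+d_G(u,y)<(b-a)+a=b$, which is the claim.

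I do not foresee a genuine obstacle: as the paper already signals, the statement holds ``just by construction'', and the argument is essentially bookkeeping. The one thing I would be careful to state explicitly is that passing to a vertex-induced subgraph never decreases distances, so the correct inequalities to invoke are $d_G\le d_H$ and $d_G\le d_{G'}$ on the relevant vertex sets; everything else is the two-step triangle inequality above. A minor remark worth including is that $\centre$ need not lie in $V(G')$ at all (it may have been absorbed into an earlier component), but this is harmless, since the lemma concerns only distances measured in $G$.
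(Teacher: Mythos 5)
Your proof is correct and matches the paper's (implicit) argument: the paper gives no explicit proof, stating the lemma ``follows just by construction,'' and your write-up is exactly the intended unwinding — radius-$a$ ball in an induced subgraph in Phase~1, radius-$(b-a)$ ball (or the isolated case) in Phase~2, monotonicity of distances under vertex-induced subgraphs, and the triangle inequality giving $(b-a)+a=b$. No gaps.
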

Also, we prove that the components indeed give a partition of the vertex set.
\begin{lemma}
Each vertex $v$ of $G$ is in exactly one component in $\comps$.
\end{lemma}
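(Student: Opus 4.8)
The plan is to prove the two inclusions separately: every vertex of $G$ lies in \emph{at most} one component of $\comps$, and every vertex lies in \emph{at least} one component.

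For the ``at most one'' direction I would argue, by induction on the processing order, that the components $\comp_1,\comp_2,\ldots,\comp_p$ are pairwise vertex-disjoint. By construction $G_i = G_{i-1}-\comp_{i-1}$ with $G_1=G$, so $V(G_i)=V\setminus(\comp_1\cup\cdots\cup\comp_{i-1})$, and since $\comp_i$ is defined either as $R'_i$ or as a ball grown inside $G_i$, in all cases $\comp_i\subseteq V(G_i)$. Hence $\comp_i$ is disjoint from $\comp_j$ for every $j<i$, which gives pairwise disjointness. (Here I also use that $p$ is finite, which is immediate: by the earlier lemma each bag contains the center of at most one core, so $p$ is bounded by the number of bags of $T(G)$.)

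For the ``at least one'' direction, fix a vertex $v$. By the Phase 1 lemma stating that the cores cover $V$, there is a core $\core_j$ with $v\in\core_j$. Consider the step of Phase 2 in which $\core_j$ is processed. If $v\in\comp_i$ for some $i<j$ we are already done. Otherwise $v\notin\comp_1\cup\cdots\cup\comp_{j-1}$, so on the one hand $v\in V(G_j)$, and on the other hand $v\in R'_j=\core_j-\comp_1-\cdots-\comp_{j-1}$. Now $\comp_j$ is set either to $R'_j$ (when $R'_j$ is isolated in $G_j$) or to $\ball_{G_j}(R'_j,\ra_j)$ with $\ra_j\in[0,b-a)$; since $\ra_j\ge 0$, in both cases $R'_j\subseteq\comp_j$, and therefore $v\in\comp_j$. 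Combining the two directions yields that $v$ lies in exactly one component.

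I do not expect a genuine obstacle here: the statement is essentially bookkeeping on the sequential ``grow a ball, then delete it'' construction. The only points that warrant a line of care are (i) that a ball of radius $\ra_i\ge 0$ around $R'_i$ always contains $R'_i$, so the case distinction in the definition of $\comp_i$ never drops a vertex, and (ii) the appeal to the earlier lemma that the cores cover $V$, which is precisely what forces every vertex to eventually be absorbed into some component.
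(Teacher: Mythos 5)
Your proof is correct and is essentially the paper's argument: disjointness follows because $\comp_i$ is grown inside $G_i=G_{i-1}-\comp_{i-1}$, and coverage follows because every vertex lies in some core $\core_j$ (Phase 1) and, if not already absorbed, belongs to $R'_j\subseteq\comp_j$ when $\core_j$ is processed. You merely spell out in more detail the step the paper states in one line, namely that $\comp_j$ always contains $R'_j$ whether it is taken as $R'_j$ itself or as the ball $\ball_{G_j}(R'_j,\ra_j)$ with $\ra_j\ge 0$.
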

\begin{proof}
Since the vertices in $G_i$ does not intersect any component of $S_1,\ldots,S_{i-1}$, we have that $v$ cannot be in more than one component.
To see that $v$ should be in a component, recall that it should be in at least one core $R$.
When this core $R$ is processed, $v$ will be picked into the resulting component, if it has not been picked in any component so far.
\end{proof}

\medskip
\textbf{Phase 3. Decomposing Components}: 
Each component $\comp\in \comps$ output by the second phase has the property that there is a set $\centre\subseteq V$ called the center of $\comp$ containing at most $r$ vertices (not necessarily in $\comp$) such that each vertex in $\comp$ is at most a distance of $b$ away from $\centre$.
In the third phase for each component $\comp$ we make an auxillary graph $G(S)$ on which we apply 
\cref{lemma:sdd_width_0}. The graph $G(S)$ is obtained by taking the induced graph of the component $G[S]$ and adding the vertices $\centre$ to it. 
In addition, for each $y\in \centre$ and each $s\in \comp$, we add edges of capacity $0$ and length equal to $d_G(y,s)$ to $G(S)$.
In this auxillary graph we find a small diameter decomposition of small cost by using \cref{lemma:sdd_width_0}.
We satisfy the pre-condition of \cref{lemma:sdd_width_0}, if we set $b=1/4$, by using \cref{lem:nearcentre}. 
Note that this means we should set $a=1/8$ as the condition $b=2a$ is required for bounding the cost of cut-edges.
Let $X'(S)$ be the cut-edges of the small diameter decomposition of $G(S)$ given by \cref{lemma:sdd_width_0}, and let $X(S)$ be $X'(S)$ minus the auxilliary edges from $Y$ to $S$.
Note that the cost of $X(S)$ is same as the cost of $X'(S)$ and $X(S)$ gives a small diameter decomposition of $G[S]$ with respect to the distance $d_G$.
Our final set of cut-edges giving a small diameter decomposition of $G$ is given by the union of $X_2$ (the cut-edges from Phase 2) and $X_3:=\bigcup_{\comp\in \comps}X(\comp)$.

\begin{lemma}
\label{lem:smalldiameterdecomp}
The set of cut-edges $X_2\cup X_3$ gives a small diameter decomposition of $G$.
\end{lemma}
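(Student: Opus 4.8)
The plan is to verify the defining property of a small diameter decomposition directly, working one Phase~2 component at a time and transferring the guarantee of \cref{lemma:sdd_width_0} from the auxiliary graph back to $G$. Recall that the components $\comps=\{S_1,\dots,S_p\}$ partition $V$, that $X_2$ consists of exactly those edges of $G$ joining two distinct components, and that $X_3=\bigcup_{S\in\comps}X(S)$ where each $X(S)\subseteq E(G[S])$ is $X'(S)$ with the auxiliary $Y$-to-$S$ edges removed; in particular the edge sets $E(G[S])$ are pairwise disjoint and $X_2\cap E(G[S])=\emptyset$ for every $S$.

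First I would observe that every connected component $C$ of $(V,E\setminus(X_2\cup X_3))$ lies inside a single $S\in\comps$: any edge of $G$ leaving $S$ is in $X_2$ and hence deleted. Moreover, among the deleted edges, the only ones with both endpoints in $S$ are those of $X(S)$, so $C$ is in fact a connected component of $G[S]\setminus X(S)$. Thus it suffices to show that every connected component of $G[S]\setminus X(S)$ has diameter strictly less than $1$ with respect to $d_G$.

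Next, fix $S$ with center $Y$, $|Y|\le r$. With the parameter choice $b=1/4$, \cref{lem:nearcentre} gives $d_G(u,Y)\le 1/4$ for every $u\in S$, and since $G(S)$ contains an edge $uy$ of length $d_G(u,y)$ for each $y\in Y$, we get $d_{G(S)}(u,Y)\le 1/4$ for all vertices of $G(S)$ (vertices of $Y$ being at distance $0$). Hence $G(S)$ with center set $Y$ meets the hypotheses of \cref{lemma:sdd_width_0}, so every component of $G(S)\setminus X'(S)$ has $d_{G(S)}$-diameter strictly less than $1$. Since $X(S)=X'(S)\cap E(G[S])$ and $G[S]$ is a subgraph of $G(S)$, the graph $G[S]\setminus X(S)$ is a subgraph of $G(S)\setminus X'(S)$, so each component of $G[S]\setminus X(S)$ sits inside a component of $G(S)\setminus X'(S)$ and therefore also has $d_{G(S)}$-diameter strictly less than $1$.

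It remains to compare $d_{G(S)}$ with $d_G$ on $S$; this is the step I expect to be the crux, precisely because ``small diameter'' is measured with respect to the distance in $G$ and not in the graph from which edges have been removed. Given $u,v\in S$ and any path $P$ from $u$ to $v$ in $G(S)$, replace each non-auxiliary edge of $P$ by the same edge of $G$ (with the same length, as $G[S]$ is an induced subgraph) and each auxiliary edge $yz$ of $P$ by a shortest $y$-$z$ path in $G$ (of length exactly $d_G(y,z)$, the length assigned to $yz$ in $G(S)$). This produces a $u$-$v$ walk in $G$ of the same total length as $P$, whence $d_G(u,v)\le d_{G(S)}(u,v)$. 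Combining the three preceding paragraphs, every component of $(V,E\setminus(X_2\cup X_3))$ has $d_G$-diameter strictly less than $1$, i.e.\ $X_2\cup X_3$ is a small diameter decomposition of $G$. (Note that the choice of auxiliary-edge lengths $d_G(y,s)$ in Phase~3 is exactly what makes this last inequality hold, and hence what lets the guarantee of \cref{lemma:sdd_width_0}, which is about $d_{G(S)}$, become a statement about $d_G$.)
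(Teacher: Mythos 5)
Your proof is correct and follows the same route as the paper: isolate each Phase-2 component via $X_2$, then apply \cref{lemma:sdd_width_0} to the auxiliary graph $G(S)$ and transfer the small-diameter guarantee back to $G$. You simply make explicit the step the paper asserts without proof just before the lemma (that $X(S)$ gives a small diameter decomposition of $G[S]$ with respect to $d_G$), by checking the hypotheses of \cref{lemma:sdd_width_0} in $G(S)$ and establishing $d_G\le d_{G(S)}$ on the component via replacement of auxiliary edges by shortest paths, which is exactly what converts the $d_{G(S)}$-diameter bound into the required $d_G$-diameter bound.
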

\begin{proof}
Once we remove $X_2$ there are no edges going across the components in $\comps$ output by Phase 2.
Within each of those components \cref{lemma:sdd_width_0} guarantees that after removing $X(S)$ the remaining connected components have a diameter strictly less than one with respect to the distance $d_G$.
Thus we have a small diameter decomposition of $G$.
\end{proof}

We will bound the cost of $X_2\cup X_3$ in the next two sections.

\section{Bounding the Volume Contributions of an Edge}\label{sec:shadow}
We want to upper bound the capacity of edges going across components produced by Phase 2.
For this, it is clearly sufficient to bound the capacity of $\delta(S_i)$ in the graph $G_i$, for each core $R_i$, because
when we process a core $R_i$ in Phase 2, we consider only the graph $G_i$.
Thus we want to bound the capacity of $\delta_{G_i}(\ball_{G_i}(\core'_i,\ra_i))$, where $\ra_i < b-a$ is the radius that satisfies \cref{eq:phase2}.
In \cref{eq:phase2}, notice that to bound the capacity, we use the volume of the ball $\ball_{G_i}(\core'_i,b-a)$, whereas we remove only $\comp_i=\ball_{G_i}(\core'_i,\ra_i)$ afterwards.
Thus, it is possible that the edges of the graph $G[\ball_{G_i}(\core'_i, b-a) \setminus \ball_{G_i}(\core'_i,\ra_i)]$ are used to pay for certain cut edges for components arising in the processing of subsequent cores. 
Our goal in this section is to bound the number of times an edge can be used to pay for the cut-edges of a component with respect to the width
of the graph, as described in the following lemma.

\begin{lemma}
\label{lem:edgeVolbound}
Let $e$ be an edge of $G$.
The number of cores $\core_i$ for which $e$ appears (even partially) in $\ball_{G_i}(\core'_i,b-a)$ during the processing of $\core_i$ in Phase 2 is at most $2r^3+2r$.
\end{lemma}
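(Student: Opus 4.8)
The plan is to fix the edge $e=(u,v)$ and turn the global count into a local, per-vertex one. Let $B_u\ni u$ and $B_v\ni v$ be the bags of the width-$r$ decomposition containing the two endpoints; by definition of width $r$ they are equal or adjacent in $T$. An edge contributes to $\vol'_{G_i}(R'_i,b-a)$ only if at least one of its endpoints lies in $\ball_{G_i}(R'_i,b-a)$, and each bag has at most $r$ vertices, so it is enough to show that for a \emph{single} vertex $x$ the number of cores $R_i$ with $x\in\ball_{G_i}(R'_i,b-a)$ at the moment $R_i$ is processed is $\O(r^2)$; summing over the $\le r$ vertices of $B_u$ and of $B_v$ then gives a bound of the shape $2r^3+2r$ (a careful count of constants giving exactly this). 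As a first easy reduction, at most one such core can have $x\in R'_i$ itself, since $R'_i\subseteq\comp_i$ and processing $\comp_i$ deletes $x$ from all later $G_j$; so the real task is to bound the cores whose radius-$(b-a)$ ball reaches $x$ \emph{without} containing it in $R'_i$.

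The argument would rest on three ingredients. (i) A \emph{confinement} property: every vertex of a core $R$ lies in the center-bag $B_R$ of $R$ or in a descendant of $B_R$ in $T$. This follows by induction on the Phase 1 iteration, since $R$ is grown inside $V[T'_{B_R}]$ together with the attachments of the bags of $T'_{B_R}$, and an attachment of a bag $D\in T'_{B_R}$ is a core created in an earlier iteration from a child of $D$, hence inductively confined to the subtree below that child, which lies inside $T_{B_R}$. (ii) The \emph{separator} property of decompositions: each bag separates $G$, so any path in $G$ between two vertices meets every bag on the tree-path joining their two bags; this transfers verbatim to the induced subgraphs $G_i$ and to the Phase 1 subgraphs. (iii) A \emph{composed-distance} estimate: if $x\in\ball_{G_i}(R'_i,b-a)$, pick $z\in R'_i$ with $d_{G_i}(x,z)\le b-a$ and concatenate with the Phase 1 path of length $\le a$ from $z$ to the center $\uncovset(B_i)$; because $b=2a$ this yields a walk of length at most $(b-a)+a=b$ in $G$ from $x$ to a vertex of the center-bag $B_i$. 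I would also use that each bag is the center-bag of at most one core, that cores of one iteration are vertex-disjoint (\cref{lemma::coresDisjointRank}), and that Phase 1 runs for at most $r$ iterations (\cref{lem:numberiterations}).

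With these in hand I would fix $x$ and a relevant core $R_i$ with witness $z\in R'_i$; by (i), $B_z$ is a descendant of $B_i$, and by (iii), $d_G(x,V(B_i))\le b$. The crux is that this metric bound \emph{by itself} does not bound the number of candidate center-bags $B_i$, because zero-length edges let a ball of radius $b$ meet arbitrarily many bags; so the bound must be extracted from the combinatorics of Phases 1--2, not from metric finiteness. I would split on the tree-position of $B_i$ relative to $B_x$. If $B_i$ is \emph{not} an ancestor of $B_x$, then $B_i$ lies on the tree-path from $B_x$ to $B_z$, so by (ii) the radius-$(b-a)$ path from $x$ to $z$ must cross $V(B_i)$; combining this with (ii) and ``one center per bag'' one controls the center-bags that such short paths out of $x$ can hit via the Phase 1 structure. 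If $B_i$ \emph{is} an ancestor of $B_x$, I would use that Phase 1's marking forbids two center-bags lying in a common uncovered subtree in one iteration from having Phase 1 balls that reach one another (a reached bag is marked \emph{visited} and can never be picked as a center afterwards), so only $\O(r)$ center-bags along the ancestor chain above $B_x$ stay ``active'' per iteration; together with the $\le r$ iterations and the top-down Phase 2 order this caps the count. Multiplying by the $\le r$ vertices of each of $B_u,B_v$ then yields $2r^3+2r$. I expect the ancestor case to be the main obstacle: turning the interaction of Phase 1's marking/attachment bookkeeping with the top-down Phase 2 deletions into an honest polynomial-in-$r$ cap on how many cores' radius-$(b-a)$ balls can cover a single vertex — something the purely metric view cannot deliver — is exactly where the real work lies.
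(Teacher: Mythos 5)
There is a genuine gap, and it sits exactly where you point: the per-vertex counting claim is asserted, not proved, and the mechanism you offer for it does not do the job. Your ingredients (i)--(iii) are fine and correspond to facts the paper also uses implicitly (cores are confined to the subtree below their center-bag, bags separate, and with $b=2a$ a vertex of $\ball_{G_i}(\core'_i,b-a)$ is within distance $b$ of the center-bag of $\core_i$; in fact confinement plus the top-down Phase~2 order makes your ``$B_i$ not an ancestor of $B_x$'' case vacuous, since all vertices of the parent bag of $B_i$ are already deleted in $G_i$). But the heart of the lemma is the ancestor case, and your argument for it is ``Phase~1's marking forbids two center-bags in a common uncovered subtree in one iteration from having Phase~1 balls that reach one another, so only $\O(r)$ center-bags per iteration stay active.'' The visited-marking is reset every iteration and concerns the radius-$a$ balls grown in the Phase~1 subgraphs; what must be controlled are the radius-$(b-a)$ balls grown in Phase~2 in the residual graphs $G_i$, whose centers are cores created in \emph{different} iterations, and these balls can reach far beyond the cores themselves. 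Nothing in the proposal converts the marking discipline into a cap on how many such Phase~2 balls can cover a fixed vertex $x$; the claimed $\O(r^2)$ per-vertex bound is exactly what remains to be proved, and you explicitly defer it.

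For comparison, the paper proves a per-vertex bound of $r^3+r$ (\cref{lem:vertexshadowbound}) via the shadow machinery: the cores whose ball can reach $x$ are grouped by successive minimal ranks $r_1<r_2<\dots<r_\ell\le r$; for each group one shows (a) the minimal-rank core is \emph{unique} (\cref{lem:oner_j}), by a path-length contradiction using $b=2a$ --- a path of length $\le b$ from one center that must exit its own core (length $>a$) and then pass through and exit the other core's center (another $>a$) is too long --- together with \cref{lem:corewithhigherrank} and \cref{lem:shadowGraphInclusion} to handle the case where the path meets the center-bag outside the center; and (b) every core in the group intersects the center of that unique core (\cref{lem:intersectQ_j}), so each group has size at most $r^2$ by \cref{lem:coresinabag}. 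This rank-stratified uniqueness-plus-intersection argument is the actual content of the lemma and is absent from your plan. Separately, your accounting is off: the edge $e=(u,v)$ lies (even partially) in a ball iff $u$ or $v$ does, so the edge bound is twice the per-vertex bound; ``summing over the $\le r$ vertices of $B_u$ and $B_v$'' is a non sequitur and is not how $2r^3+2r$ arises (the paper gets it as $2(r^3+r)$, the $+r$ accounting for the at most one core per rank that contains the vertex itself).
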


The rest of this section is devoted to proving the above lemma.
We begin by introducing some definitions.
We say that a core has \emph{rank} $i$ if it was constructed in the $i$-th iteration of phase 1. 
We define the rank of a vertex to be the lowest rank among all the cores containing it.
For each core we define its \emph{center-bag} to be the bag of $T(G)$ that contains its \centeri.
We say that a core $\core_1$ is an ancestor (descendant resp.) of $\core_2$ if the center-bag of $\core_1$ is an ancestor (descendant resp.) of $\core_2$.
For any bag of $T(G)$ we define its \emph{level} to be the hop distance from the root in $T(G)$.
Also, we define the \emph{graph rooted at $B$} to be the subgraph of $G$ induced by the subtree of $T(G)$ rooted at $B$.
For any core $\core$ we denote by $G_T[\core]$ to be the graph rooted at the center-bag of $\core$.
Let $H(\core)$ denote the set of cores that are ancestors of $\core$ and have rank strictly less than $\core$.
We define the \emph{shadow-domain} of a core $\core$ to be the graph obtained from $G_T[\core]$ by removing the vertices that are contained in at least one core in $H(\core)$.
Note that the shadow domain of a core $R_i$ in phase $2$ is a super-graph of the graph $G_i$.
The \emph{shadow} $\shadow$ of a core $\core$ is defined as the ball of radius $b-a$ centered around $\core$ in the shadow-domain of $\core$.
Note that the shadow of of a core $R_i$ in phase 2 contains $\ball_{G_i}(\core'_i,b-a)$.
The \emph{strict shadow} of $\core$ is defined as its shadow minus itself, ie.~$\shadow \setminus \core$. 

\begin{figure}[h]\label{shadows}
\centering
\begin{subfigure}{.5\textwidth}
  \centering
  \includegraphics[width=1\linewidth]{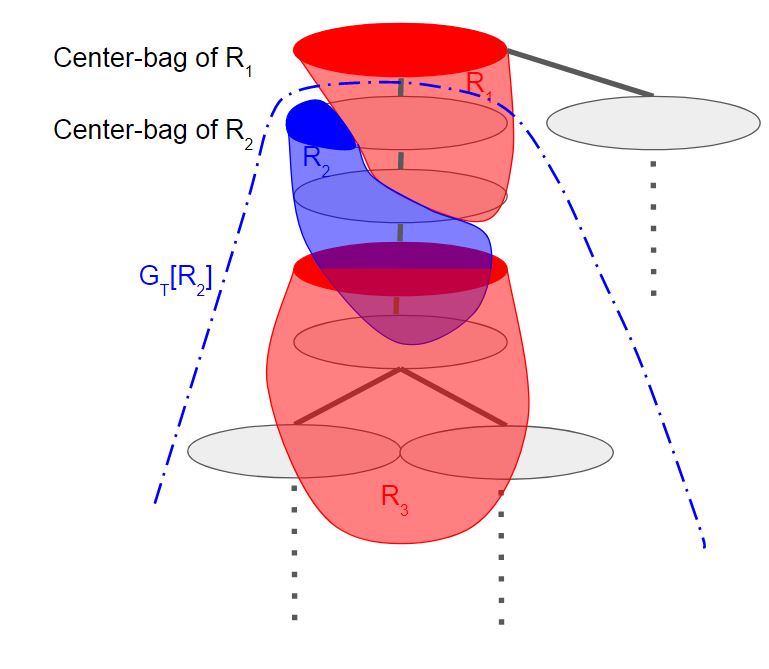}
  \caption{}
  \label{fig:sub1}
\end{subfigure}%
\begin{subfigure}{.5\textwidth}
  \centering
  \includegraphics[width=1\linewidth]{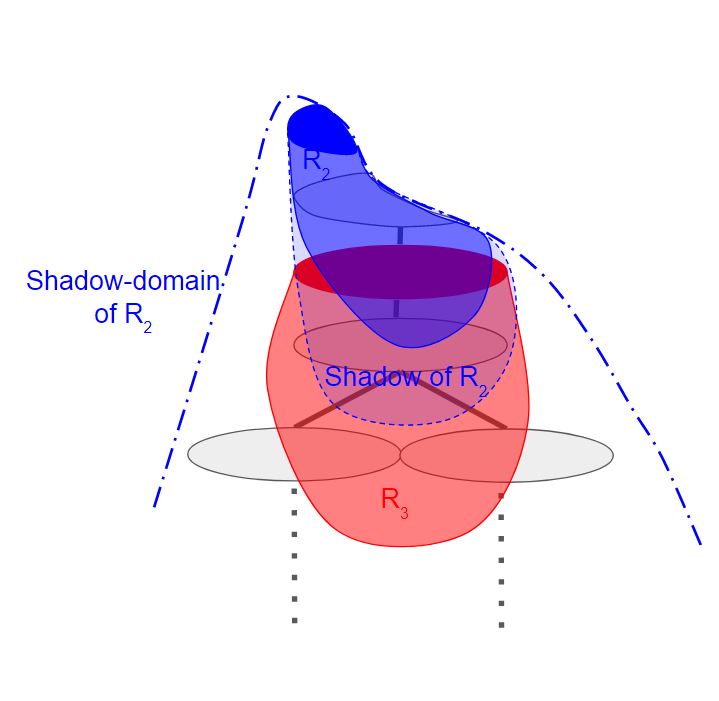}
  \caption{}
  \label{fig:sub2}
\end{subfigure}
\caption{An illustration of the graph $G_T[R]$, shadow-domain, and shadow.}
\label{fig:shadow}
\end{figure}

Next, we will prove some lemmas useful for proving the main lemma of this section.
The following two statements bounds the number of intersecting cores. 


\begin{lemma}
The number of cores intersecting any bag is at most $r^2$.
\label{lem:coresinabag}
\end{lemma}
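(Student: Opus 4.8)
The plan is to combine two facts: first, that a bag has at most $r$ vertices, and second, that each vertex can belong to only a bounded number of cores. So the strategy is to show that every core intersecting a bag $B$ must contain one of the at most $r$ vertices of $B$, and then to argue that any fixed vertex $v$ lies in at most $r$ cores; multiplying gives $r \cdot r = r^2$. The first point is immediate from the definition of "intersecting a bag": a core $\core$ intersects $B$ iff $\core \cap B \neq \emptyset$, so it contains some $v \in B$, and $|B| \le r$. Thus it suffices to prove the per-vertex bound.

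For the per-vertex bound I would argue as follows. Fix a vertex $v$. By \cref{lemma::coresDisjointRank}, the cores constructed in a single iteration of Phase 1 are pairwise vertex-disjoint, so $v$ is contained in at most one core of each rank. By \cref{lem:numberiterations}, Phase 1 has at most $\tw$ iterations, hence there are at most $\tw$ distinct ranks. Therefore $v$ is contained in at most $\tw = r$ cores in total. This is the key step, and the only place where a genuine argument (as opposed to bookkeeping) is needed; everything else is just unwinding definitions.

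Putting the two pieces together: any core intersecting $B$ contains at least one of the $\le r$ vertices of $B$, and each such vertex is in $\le r$ cores, so by a union bound over the vertices of $B$ the number of cores intersecting $B$ is at most $r \cdot r = r^2$. The main (mild) obstacle is making sure the "intersecting" relation is read correctly — a core intersects a bag when they share a vertex, not when the core's center-bag is that bag — and that the disjointness-per-iteration lemma is being applied at the level of individual vertices rather than bags; once those are pinned down the bound is tight and the proof is short. I would write it as:

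\begin{proof}
Let $B$ be a bag of $T(G)$; recall $|B|\le r$. If a core $\core$ intersects $B$, then by definition $\core\cap B\neq\emptyset$, so $\core$ contains some vertex of $B$. Hence it suffices to show that every fixed vertex $v$ is contained in at most $r$ cores. By \cref{lemma::coresDisjointRank}, the cores constructed in a given iteration of Phase 1 are pairwise vertex disjoint, so $v$ belongs to at most one core of each rank. By \cref{lem:numberiterations}, Phase 1 runs for at most $r$ iterations, so there are at most $r$ distinct ranks, and therefore $v$ lies in at most $r$ cores. Summing over the at most $r$ vertices of $B$, at most $r\cdot r=r^2$ cores intersect $B$.
\end{proof}
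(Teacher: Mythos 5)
Your proof is correct and takes essentially the same approach as the paper: both use \cref{lemma::coresDisjointRank} (disjointness within a rank) and \cref{lem:numberiterations} (at most $r$ ranks) to get an $r\cdot r$ bound. The only difference is cosmetic — you organize the double count by vertex-then-rank while the paper groups by rank-then-vertex.
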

\begin{proof}
First, let us determine an upper bound for the number of cores of the same rank that can be intersecting a given bag $B \in T(G)$. The cores of the same rank are disjoint by \cref{lemma::coresDisjointRank}. 
Hence there can at most be $\tw$ of them intersecting $B$ because $|B| \leq \tw$. 
Then, since the rank of each core is in $[r]$ by \cref{lem:numberiterations} we have that the total number of cores intersecting any bag is at most $r^2$.
\end{proof}

The following also follows from the Phase 1 algorithm.
\begin{lemma}
\label{lem:noncentervertices}
If $B$ is a bag containing the center $\centre$ of a core that has rank $i$ then all the vertices in 
 $B \setminus \centre$
 have a rank strictly lower than $i$.
\end{lemma}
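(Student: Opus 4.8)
The plan is to argue directly from the Phase 1 algorithm, using the fact that a core's center is of the form $\uncovset(B)$ at the moment the bag $B$ was picked, together with the property that every vertex in $\uncovset(B)$ has its rank determined in the very iteration in which the core is grown. Concretely, suppose the core with center $\centre$ has rank $i$, meaning it was constructed during the $i$-th iteration of Phase 1, and it was grown when some unvisited bag $B$ of the current uncovered-forest component $T'$ was picked. By the construction in step 2(b), the center $\centre$ of this core is exactly $\uncovset(B)$ — the set of uncovered vertices of $B$ at that point of iteration $i$. So $B \setminus \centre$ consists precisely of those vertices of $B$ that were already covered before iteration $i$ (or earlier in iteration $i$, but I will argue this case does not occur).

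First I would observe that no vertex of $B$ can become covered earlier within iteration $i$ itself: within a single iteration, once a bag is marked visited (step 2(c)) its vertices are no longer "uncovered" from the algorithm's bookkeeping only in the sense of being in already-grown cores, but more to the point, $B$ is picked as an \emph{unvisited} bag (step 2(a)), and a bag that had some vertex covered by a core grown earlier in the same iteration would have been marked visited in step 2(c) when that core was grown (since the core intersects $B$). Hence at the time $B$ is processed in iteration $i$, every vertex of $B$ that is covered was covered in an iteration strictly before $i$, i.e.\ has rank $< i$. This is exactly the claim: every vertex in $B \setminus \centre = B \setminus \uncovset(B)$ has rank at most $i-1$, which is strictly lower than $i$. (Vertices of $B$ with rank exactly $i$, if any, are the ones in $\uncovset(B) = \centre$, since the core grown from $B$ covers all of $\uncovset(B)$ and no other core of rank $i$ can recover them by \cref{lemma::coresDisjointRank}.)

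The only subtlety — and the step I would be most careful about — is making precise the claim that "$B$ being unvisited when picked in iteration $i$ forces all previously-covered vertices of $B$ to have been covered in a strictly earlier iteration." The argument is that step 2(c) marks visited \emph{every} bag of $T'$ intersecting a newly grown core; so if a core of rank $i$ grown earlier in iteration $i$ contained any vertex of $B$, then $B$ would already be visited and could not be chosen in step 2(a). Combined with the definition of rank (the lowest rank of a core containing the vertex) and \cref{lemma::coresDisjointRank} (cores of the same rank are disjoint, so a vertex covered in iteration $i$ lies in a unique rank-$i$ core), this pins down that $B \setminus \centre$ has rank strictly less than $i$, completing the proof.
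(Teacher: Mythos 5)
Your proposal follows the same high-level approach as the paper's own (terse) proof: both reduce the claim to observing that $B \setminus \centre = B \setminus \uncovset(B)$ consists of vertices that were already covered at the moment $B$ is processed in iteration $i$, and then argue that those vertices were in fact covered before iteration $i$ started, hence have rank strictly less than $i$.

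However, the step where you rule out ``covered earlier in iteration $i$'' has a real gap. You assert that if a core grown earlier in iteration $i$ contained a vertex of $B$, then $B$ would already have been marked visited in step 2(c). But step 2(c) only marks bags \emph{of the current component} $T'$ as visited; a core grown earlier in iteration $i$ from a \emph{different} component $T''$ of the uncovered-bag forest would not cause $B$ to be marked visited, so the ``$B$ was picked as unvisited'' argument does not apply to that case, and your proof as written does not address it. To close the gap you need to argue that a core $\core_v$ grown from a component $T''\neq T'$ cannot cover a vertex $v\in B$: at the moment $\core_v$ is grown, $v$ (being covered for the first time) is still uncovered, and attachments by construction consist only of already-covered vertices, so $v$ can lie in the graph $G[\uncovset(V[T''_{B_v}])\cup A[T''_{B_v}]]$ where $\core_v$ is grown only via $\uncovset(V[T''_{B_v}])$; but that forces $B$ to be a bag of $T''_{B_v}$ and hence $T''=T'$, a contradiction. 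With that supplement, your argument goes through and agrees with the paper's reasoning (the paper itself hides this step behind the phrase ``by construction'').
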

\begin{proof}
%
In case a bag $B$ contains a center $\centre$ that does not contain all vertices of $B$, ie.~$B \setminus \centre \neq \emptyset$, then as explained above $B \setminus \centre$ have to be covered by the previously constructed cores.
Those cores must have been constructed in previous iterations, by construction.
This implies that the vertices in $\bag\setminus\centre$ have a strictly lower rank than the vertices in $\centre$. 
\end{proof}


The use of \cref{lem:noncentervertices} allows to prove a relation between ranks and shadow-domains.

\begin{lemma}
\label{lem:corewithhigherrank}
Let $\core_1$ and $\core_2$ be two cores such that the rank of $\core_1$ is greater than or equal to that of $\core_2$ and $\core_1$ is an ancestor of $\core_2$.
Let $B$ be the center-bag of $\core_2$ and let $\centre$ be the center of $\core_2$.
If the shadow-domain of $\core_1$ intersects $B\setminus \centre$ then there is a core $\core_3$ that is an ancestor of $\core_2$ and descendant of $\core_1$ and has rank strictly smaller than the rank of $\core_2$ .
\end{lemma}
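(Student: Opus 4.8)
The plan is to exploit the interaction between the shadow-domain of $\core_1$ and the rank structure of the bag $B$. Suppose the shadow-domain of $\core_1$ intersects $B \setminus \centre$, say at a vertex $w \in B \setminus \centre$. By \cref{lem:noncentervertices}, since $B$ is the center-bag of $\core_2$ and $\core_2$ has some rank, every vertex of $B \setminus \centre$ — in particular $w$ — has rank strictly smaller than the rank of $\core_2$. Let $\core_3$ be a core of smallest rank that contains $w$; then $\core_3$ also has rank strictly smaller than that of $\core_2$, and $w \in \core_3$. I would then argue that $\core_3$ witnesses the conclusion: it must be an ancestor of $\core_2$ and a descendant of $\core_1$.

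The key step is locating $\core_3$ in the tree $T(G)$. First, $w$ lies in $\core_3$ and also in the bag $B$, so $\core_3$ intersects $B$; hence the center-bag of $\core_3$ is comparable with $B$ in $T(G)$ — this requires knowing that the bags a core intersects form a connected subtree containing its center-bag, which follows from the Phase 1 construction together with the fact that a core is a ball grown only inside $V[T'_{B'}] \cup A[T'_{B'}]$ for its center-bag $B'$, combined with the attachment mechanism. Then I want to show the center-bag of $\core_3$ is a (strict) ancestor of $B$, i.e. that $\core_3$ is an ancestor of $\core_2$. This should follow because $w$ is not the center of any core in $B$ (the only core centered at $B$ is $\core_2$, by the lemma that each bag contains at most one center, and $w \notin \centre$), so if $\core_3$ were centered at $B$ or below, its center-bag being $B$ is excluded, and a descendant center-bag would force $w$ to be reached from below, contradicting that $w \in B$ lies at level equal to $B$; more carefully, $\core_3$'s center-bag must be an ancestor of $B$ since $\core_3$ reaches a vertex of $B$ and is not centered at $B$.

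Next I would show $\core_3$ is a descendant of $\core_1$ (equivalently, the center-bag of $\core_1$ is a strict ancestor of the center-bag of $\core_3$, or they coincide — but rank considerations will rule out coincidence). Here is where the hypothesis that the shadow-domain of $\core_1$ intersects $B \setminus \centre$ comes in: the shadow-domain of $\core_1$ is $G_T[\core_1]$ with the vertices of $H(\core_1)$-cores removed, so $w$ lying in it means $w$ is in the subtree rooted at $\core_1$'s center-bag, hence $B$ (which contains $w$) is in that subtree, so $\core_1$ is an ancestor of $B$, i.e. of $\core_2$ — which we already knew — and moreover $w$ is not covered by any core in $H(\core_1)$. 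Since $w \in \core_3$ and $\core_3$ has rank strictly less than $\core_2 \le \core_1$ in rank, $\core_3$ would be a candidate for $H(\core_1)$ if its center-bag were a strict ancestor of $\core_1$'s; but then $w$ would be removed from the shadow-domain of $\core_1$, contradiction. Therefore the center-bag of $\core_3$ is not a strict ancestor of $\core_1$'s center-bag; combined with $\core_3$'s center-bag being comparable to $B$ and an ancestor of $B$, and $\core_1$ being an ancestor of $B$, the two center-bags are comparable, so $\core_3$'s center-bag must be $\core_1$'s or a descendant of it. Rank: if $\core_3$'s center-bag equals $\core_1$'s, then by \cref{lem:noncentervertices} applied to $\core_1$ we would get a rank contradiction unless $w$ is the center of $\core_1$ — but $\core_1$ is an ancestor of $\core_2$, so its center-bag is a strict ancestor of $B$ and cannot equal $B \ni w$; hence $\core_3$'s center-bag is a strict descendant of $\core_1$'s, i.e. $\core_3$ is a descendant of $\core_1$.

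The main obstacle I anticipate is making precise the claim that the set of bags a core intersects, together with its center-bag, behaves well enough in $T(G)$ to conclude comparability and ancestry relations — in particular nailing down why a core that touches a bag $B$ but is not centered at $B$ must be centered strictly above $B$. This needs a careful reading of the Phase 1 rule that a new core is the ball $\ball(\uncovset(B'),a)$ taken in $G[\uncovset(V[T'_{B'}]) \cup A[T'_{B'}]]$ and that cores only get attached to parent bags, so a core can "reach up" only through attachments, which trace monotonically toward the root; this is exactly what forces $\core_3$ with $w \in \core_3 \cap B$ and $w \notin \centre$ to have its center-bag at or above $B$, and the at-most-one-center-per-bag lemma upgrades "at or above" to "strictly above."
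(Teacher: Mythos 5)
Your proof follows essentially the same route as the paper's: pick a vertex $w$ in the intersection, invoke \cref{lem:noncentervertices} to conclude $w$ has rank strictly below that of $\core_2$, take a lower-rank core $\core_3$ containing $w$, and then argue ancestry from the shadow-domain definition. The paper states the two ancestry claims almost without justification, whereas you correctly supply the missing supporting facts (that a core is contained in $V[T_{B'}]$ for its center-bag $B'$, hence its center-bag is a weak ancestor of every bag it meets, and that each bag holds at most one center); your one wobble — appealing to \cref{lem:noncentervertices} to rule out $\core_3$ and $\core_1$ sharing a center-bag — is unnecessary and slightly garbled, since the ``at most one center per bag'' lemma already forces $\core_3=\core_1$ there, contradicting the rank inequality directly.
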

\begin{proof}
Let $x$ be a vertex in the intersection of the shadow-domain of $\core_1$ and $B \setminus \centre$.
By \cref{lem:noncentervertices} it follows that $x$ has a rank strictly lower than the rank of $\core_2$.
Thus $x$ is contained in a core $\core_3$ that has rank strictly lower than $\core_2$.
Also, $\core_3$ is a descendant of $\core_1$ as otherwise none of the vertices in $\core_3$ and in particular $x$ is not in the shadow-domain of $\core_1$.
Finally, $\core_3$ is an ancestor of $\core_2$ as the center-bag of $\core_2$ contains vertices of $\core_3$.
\end{proof}

The next observation will also help towards proving the main lemma of the section. 

\begin{lemma}
Let $\core_1$ and $\core_2$ be two cores such that the rank of $\core_1$ is strictly greater than that of $\core_2$, and $\core_1$ is an ancestor of $\core_2$.
Furthermore, suppose that there are no cores of rank smaller than $\core_2$ whose center-bag is in the path between the center-bags of $\core_1$ and $\core_2$.
In $G_T[\core_2] $, every vertex in the shadow-domain of $\core_1$ is in the shadow-domain of $\core_2$.
\label{lem:shadowGraphInclusion}
\end{lemma}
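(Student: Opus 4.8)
The plan is to argue by contradiction: suppose some vertex $x$ of $G_T[\core_2]$ lies in the shadow-domain of $\core_1$ but not in the shadow-domain of $\core_2$, and derive a violation of one of the two hypotheses. Write $B_1$ and $B_2$ for the center-bags of $\core_1$ and $\core_2$. Since $B_1$ is an ancestor of $B_2$, the subtree of $T(G)$ rooted at $B_2$ sits inside the one rooted at $B_1$, so $G_T[\core_2]$ is an induced subgraph of $G_T[\core_1]$ and in particular $x$ is a vertex of $G_T[\core_1]$. Unwinding the definition of shadow-domain: $x$ being in the shadow-domain of $\core_1$ means $x$ is contained in no core of $H(\core_1)$, whereas $x$ failing to be in the shadow-domain of $\core_2$ — despite being a vertex of $G_T[\core_2]$ — means $x$ is contained in some core $\core_3 \in H(\core_2)$. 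Hence $\core_3$ is an ancestor of $\core_2$, $\mathrm{rank}(\core_3) < \mathrm{rank}(\core_2)$, and (since $x \in \core_3$) $\core_3 \notin H(\core_1)$.

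Next I would locate the center-bag $B_3$ of $\core_3$. Both $B_3$ and $B_1$ are ancestors of $B_2$, so both lie on the root-to-$B_2$ path of $T(G)$ and are therefore comparable. In case (i), $B_3$ is an ancestor of $B_1$: if $B_3 = B_1$ then, since each bag contains the center of at most one core, $\core_3 = \core_1$, contradicting $\mathrm{rank}(\core_3) < \mathrm{rank}(\core_2) < \mathrm{rank}(\core_1)$; and if $B_3$ is a proper ancestor of $B_1$, then $\core_3$ is an ancestor of $\core_1$, which together with $\mathrm{rank}(\core_3) < \mathrm{rank}(\core_1)$ puts $\core_3 \in H(\core_1)$, contradicting $\core_3 \notin H(\core_1)$. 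In case (ii), $B_1$ is a proper ancestor of $B_3$: since $B_3$ is also an ancestor of $B_2$, the bag $B_3$ lies on the path between $B_1$ and $B_2$ (if $B_3 = B_2$ then $\core_3 = \core_2$ by uniqueness of the center per bag, contradicting the ranks; otherwise $B_3$ is strictly between $B_1$ and $B_2$). But $\mathrm{rank}(\core_3) < \mathrm{rank}(\core_2)$, so $\core_3$ is a core of rank smaller than $\core_2$ whose center-bag lies on the path between $B_1$ and $B_2$, contradicting the second hypothesis. Every case gives a contradiction, so no such $x$ exists, which is the claim.

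There is no analytic content here — the proof is a pure tree-combinatorial case split — so the only thing to be careful about is bookkeeping: consistently translating the ancestor relation on cores into the ancestor relation on center-bags, invoking $H(\cdot)$ with its strict rank inequality, and handling the two boundary cases $B_3 = B_1$ and $B_3 = B_2$ using the earlier fact that each bag contains the center of at most one core. I would also make explicit which hypothesis kills which case: the rank gap $\mathrm{rank}(\core_1) > \mathrm{rank}(\core_2)$ (used via $\mathrm{rank}(\core_3) < \mathrm{rank}(\core_2)$) rules out case (i), and the absence of a core of rank smaller than $\core_2$ with center-bag between $B_1$ and $B_2$ rules out case (ii).
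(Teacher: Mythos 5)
Your proof is correct and follows essentially the same route as the paper: both extract a core $\core_3 \in H(\core_2)$ containing the offending vertex $x$, observe that its center-bag must lie on the root-to-$B_2$ path, and use the hypothesis together with the rank inequalities to force $\core_3 \in H(\core_1)$, contradicting $x$ being in the shadow-domain of $\core_1$. The only difference is cosmetic: the paper argues the contrapositive directly and glosses over the boundary cases $B_3 \in \{B_1, B_2\}$, whereas you phrase it as a contradiction and spell those boundary cases out via the one-center-per-bag lemma.
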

\begin{proof}
Let $x$ be a vertex in $G_T[\core_2]$ that is not in the shadow-domain of $\core_2$.
It suffices to prove that $x$ is not in the shadow-domain of $\core_1$.
Since $x$ is not in the shadow-domain of $\core_2$ there should be an ancestor core $\core_3$ of $\core_2$ having rank smaller than $\core_2$ containing $x$.
By the precondition of the lemma, the core $\core_3$ has to be an ancestor of also $\core_1$.
This means that $x$ cannot be in the shadow-domain of $\core_1$.
\end{proof}

We use all the above properties to prove the following lemma, from which the main lemma of the section follows rather directly.

\begin{lemma}
\label{lem:vertexshadowbound}
If $b=2a$ in our algorithm, then for any vertex $u$, the number of cores whose strict shadow contains $u$ is at most $r^3$.
\end{lemma}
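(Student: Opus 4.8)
The plan is to fix a vertex $u$ and analyze the set $\mathcal{C}_u$ of cores whose strict shadow contains $u$. I would classify each such core $\core$ by its center-bag $B(\core)$, and argue that there cannot be too many cores in $\mathcal{C}_u$ sharing a center-bag, nor too many ``relevant'' center-bags. For the first part, observe that by Lemma~\ref{lem:coresinabag} at most $r^2$ cores can intersect any fixed bag; but the cores whose center-bag is a fixed bag $B$ are a subset of those intersecting $B$, and in fact each bag contains the center of at most one core. So the count must come from bounding the number of distinct center-bags that can host a core of $\mathcal{C}_u$, and then multiplying by the at most $r$ possible ranks — giving the target $r^3$ once I show at most $r^2$ (or rather, at most $r$ bags per rank, hence $r^2$ bags total, times... no) — let me restructure: I expect the bound to factor as (number of distinct center-bags hosting a core of $\mathcal{C}_u$) $\le r^2$, which combined with one core per bag already gives $r^2$; the extra factor of $r$ must instead come from a more careful accounting where along each root-to-$u$-type path the relevant ranks interleave.

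Concretely, here is the structure I would use. A core $\core\in\mathcal{C}_u$ has $u$ in its shadow, which lives in the shadow-domain of $\core$, i.e.\ in the subgraph rooted at $B(\core)$ with higher-rank-ancestor cores deleted. In particular $u$ must lie in a bag in the subtree rooted at $B(\core)$, so all center-bags of cores in $\mathcal{C}_u$ lie on the tree-path $\Pi$ from the root down to the bag (or bags) containing $u$ — actually on the path to $B(\core_u)$ where $\core_u$ has minimal-level center-bag, but more robustly: they all lie on a single root-leaf path passing through a bag containing $u$. I would then walk down this path $\Pi$ and, using Lemma~\ref{lem:shadowGraphInclusion} and Lemma~\ref{lem:corewithhigherrank}, show a monotonicity/nesting phenomenon: along $\Pi$, if two cores $\core_1$ (higher up) and $\core_2$ (lower) both have $u$ in their strict shadow and $\mathrm{rank}(\core_1)\ge\mathrm{rank}(\core_2)$, then there must be an intervening core of strictly smaller rank on the path between their center-bags. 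Iterating this ``rank must strictly drop unless a smaller-rank core intervenes'' argument along $\Pi$ limits, for each of the $r$ possible rank values, how many cores of that rank in $\mathcal{C}_u$ can appear: essentially the sequence of ranks read down $\Pi$ cannot have too many occurrences of any given value without forcing a witness of smaller rank in between, and since ranks are bounded by $r$ and center-bags by the structure, the total is $O(r^3)$.

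The main obstacle will be Step three: turning the qualitative statements of Lemmas~\ref{lem:corewithhigherrank} and \ref{lem:shadowGraphInclusion} into a clean quantitative count of $r^3$. The subtlety is that $u$ being in the \emph{strict} shadow $\shadow\setminus\core$ (not $u\in\core$) is what makes the shadow-domains genuinely nested along $\Pi$ once we restrict to a ``no smaller-rank core in between'' situation, and I will need to partition $\Pi$ into maximal segments within which no core of a given small rank has its center-bag, then apply Lemma~\ref{lem:shadowGraphInclusion} to get that shadow-domains of same-rank cores in such a segment are nested, and apply Lemma~\ref{lem:corewithhigherrank} to bound how many cores of rank $\ge$ that value can have $u$ in their strict shadow within one segment (I expect at most $r$, via the at-most-$r$-vertices-per-center-bag and the $B\setminus\centre$ intersection argument). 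Summing the $\le r$ cores per segment over $\le r$ segments per rank level and $\le r$ rank levels yields $r^3$. I would finish by noting Lemma~\ref{lem:edgeVolbound} follows: an edge $e$ contributes to $\vol'_{G_i}(\core'_i,b-a)$ only if at least one endpoint lies in the shadow of $\core_i$, i.e.\ is in $\core_i$ or in its strict shadow; the first case happens for at most... (cores containing a fixed vertex, bounded via ranks and Lemma~\ref{lem:coresinabag}), and the second for at most $r^3$ cores per endpoint by this lemma, totalling $2r^3+2r$.
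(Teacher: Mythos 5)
Your setup matches the paper's: fix $u$, note that the center-bags of all cores in $\mathcal{C}_u$ lie on the root-to-$u$ path in $T(G)$, and try to organize the count by rank using \cref{lem:corewithhigherrank}, \cref{lem:shadowGraphInclusion} and \cref{lem:coresinabag}. But the proposal has a genuine gap exactly where you flag it yourself (``turning the qualitative statements into a clean quantitative count of $r^3$''), and the missing ingredient is the one place the hypothesis $b=2a$ is actually used: a metric argument. Each core is a ball of radius $a$ around its center, and a vertex in the strict shadow of a core is reached from that core's center by a path of length at most $b$ inside the shadow-domain. Hence such a path cannot first leave its own core (cost more than $a$) and then pass through the center of a second, disjoint core and leave that core too (another cost of more than $a$), since $2a=b$. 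The paper uses precisely this in its two claims (\cref{lem:oner_j} and \cref{lem:intersectQ_j}): it partitions $\mathcal{C}_u$ into groups $\C_1,\dots,\C_\ell$ with strictly increasing minimum ranks $r_1<\dots<r_\ell\le r$ (so $\ell\le r$), shows each group has a unique minimum-rank core $P_j$, and shows, via the length argument combined with \cref{lem:corewithhigherrank} and \cref{lem:shadowGraphInclusion} for the ``path meets the center-bag but not the center'' case, that \emph{every} core in $\C_j$ intersects the center of $P_j$; since that center sits in a single bag, \cref{lem:coresinabag} gives $|\C_j|\le r^2$, hence $|\mathcal{C}_u|\le r^3$. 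Your sketch never invokes any distance/length accounting, so nothing in it actually exploits $b=2a$, and the lemma is false without some such use of the radii.

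Concretely, your proposed count ``at most $r$ cores per segment, at most $r$ segments per rank level, at most $r$ rank levels'' is not supported: the number of maximal segments of $\Pi$ avoiding center-bags of smaller-rank cores is governed by how many such center-bags lie on $\Pi$, which can far exceed $r$ (the path can be long and smaller-rank cores on it need not belong to $\mathcal{C}_u$, so nothing caps their number by $r$); and the per-segment bound of $r$ is exactly the crux and is asserted, not proved --- the purely combinatorial lemmas you cite only say that an intervening smaller-rank core \emph{exists}, not that the cores sharing a segment must all meet one common bag or center. The nesting of shadow-domains from \cref{lem:shadowGraphInclusion} likewise does not by itself limit how many cores can have $u$ in their strict shadows. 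To repair the argument you essentially need the paper's step: show that within each residual family the cores all intersect the center of its unique minimum-rank member, and that step requires the ``exit two disjoint radius-$a$ balls costs more than $b$'' contradiction. (Your final paragraph deducing \cref{lem:edgeVolbound} from the present lemma is fine and matches the paper.)
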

\begin{proof}
Consider a vertex $u$.
Let $\C$ be the set of cores whose strict shadow contains $u$.
So, our goal is to prove that $|\C|\le r^3$.

Let $P$ be the path in $T(G)$ from the root bag to the bag containing $u$. 
The center-bags of the cores in $\C$ are all on $P$.
This comes from the fact that 
 the shadow-domain of a core is a subgraph of the graph rooted at its center-bag.

Let $r_1$ be the lowest rank among all cores in $\C$.
We will show that there is only one core in $\C$ having rank $r_1$ in \cref{lem:oner_j}.
Let this unique core in $\C$ with rank $r_1$ be $P_1$.
Let $\C_1$ be the set of all cores in $\C$ (including $P_1$) that are ancestors of $P_1$. 
We will show that each core in $\C_1$ intersects the center of $P_1$ in \cref{lem:intersectQ_j}. 
Since the center is contained in a bag, by \cref{lem:coresinabag} this implies that $|\C_1|\le r^2$.

Now, let $\bar{\C}_1=\C\setminus \C_1$.
If $\barC_1$ is non-empty, define $r_2$ be the lowest rank among cores in $\bar{\C_1}$.
Note that $r_2>r_1$.
We show that there is only one core in $\barC_1$ having rank $r_2$ in \cref{lem:oner_j}.
Let this unique core in $\barC_1$ with rank $r_2$ be $P_2$.
Let $\C_2$ be the set of all ancestor cores $P_2$ in $\barC_1$ including $P_2$.
We show that each core in $\C_2$ intersects the center of $P_2$ in \cref{lem:intersectQ_j}.
Since the center is contained in a bag, by \cref{lem:coresinabag} this implies that $|\C_2|\le r^2$.

We repeat the procedure and define the sequences $\C_1,\C_2,\C_3,\dots ,\C_{\ell}$  and $r_1<r_2<\dots <r_{\ell}$ until $\barC_{\ell}=\barC_{\ell-1}\setminus \C_{\ell}$ is empty, where $\barC_0 = \C$.
Here, $r_j$ is defined as the lowest rank among cores in $\barC_{j-1}$.
For each $j\in [\ell]$, we show that there is a unique core $P_j$ in $\barC_{j-1}$ with rank $r_j$ in \cref{lem:oner_j}. 
Let $\C_j$ be the set of all cores in $\barC_j$ that are ancestors of $P_j$ (including $P_j$).
Also, we show that all cores in $\C_j$ intersects the center of $P_j$ in \cref{lem:intersectQ_j}, 
implying that $|\C_j|\le r^2$ for each $j\in [\ell]$.
Since $r_1<r_2<\dots <r_{\ell}\le r$,
this implies that $|\C|\le r^3$.

Thus, the proof of the lemma concludes by proving the following claims.
\begin{claim}
There is only one core in $\barC_{j-1}$ having rank $r_j$ for each $j\in [\ell]$.
\label{lem:oner_j}
\end{claim}
\begin{proof}
Suppose this is not true.
Then there are two cores $\core_1$ and $\core_2$ in $\barC_{j-1}$ having rank $r_j$.
Assume without loss of generality that $\core_1$ is an ancestor of $\core_2$.
Note that $\core_1$ and $\core_2$ are disjoint as they have the same rank.
Also, since $\core_1,\core_2\in \C$, the center-bags of $\core_1$ and $\core_2$ lie on the path $P$.


Since $u$ is in the shadow of $\core_1$ there is a path $Z$ of length at most $b$  that goes from the center of $\core_1$ to $u$ in the shadow-domain of $\core_1$.
This path $Z$ has to intersect the center-bag of $\core_2$ to get to $u$.

Suppose this intersection occurs at a vertex in the center of $\core_2$.
Then the path $Z$ goes from the center of $\core_1$ to outside $\core_1$ and then into the center of $\core_2$ and then to outside of $\core_2$. It has to go outside of $\core_2$ as $u$ is in the strict shadow of $\core_2$.
Also, it has to go outside of $\core_1$ before entering $\core_2$ as $\core_1$ and $\core_2$ are disjoint (however, it is possible that there is an edge from $\core_1$ to $\core_2$).
This means $Z$ has a length of more than $a+a=2a=b$, a contradiction.

Now, suppose this intersection occurs at a vertex in the center-bag of $\core_2$ that is not in the center of $\core_2$.
Then by \cref{lem:corewithhigherrank} it follows that there is at least one core that is ancestor of $\core_2$ and descendant of $\core_1$, and having rank lower than $r_j$.
Let $\core_3$ be the one among such cores whose center-bag has the smallest level. 

We claim that the path $Z$ intersects the center of $\core_3$.
Suppose otherwise.
However, the path has to intersect the center-bag of $\core_3$ to get to $u$.
Then, by \cref{lem:corewithhigherrank} it follows that there is a core that is ancestor of $\core_3$ and descendant of $\core_1$, and having rank lower than $r_j$, a contradiction to the selection of $\core_3$.

Note that $u$ is not in $\core_3$ as $\core_3$ is disjoint from the shadow-domain of $\core_2$ by definition of shadow-domain.
However, $u$ is contained in the shadow of $\core_3$ as the part of the path $Z$ from the center of $\core_3$ to $u$ is contained in the shadow-domain of $\core_3$ (by \cref{lem:shadowGraphInclusion}) and has length at most $b$.
Thus, $u$ is contained in the strict shadow of $\core_3$.
Since $\core_3$ is a descendant of $\core_1$, we have that $\core_3\notin \C_1\cup\C_2\cup\dots\cup\C_{j-1}$ and hence $\core_3\in \barC_{j-1}$.
Thus, there is a core in $\barC_{j-1}$ that has rank lower than $r_j$, a contradiction to the choice of $r_j$.
\end{proof}

\begin{claim}
Each core in $\C_j$ intersects the center of $P_j$ for each $j\in [\ell]$. 
\label{lem:intersectQ_j}
\end{claim}
\begin{proof}
Suppose this is not true.
Then there is a core $\core \in \C_j$ that is disjoint from the center of $P_j$.
Note that $P_j$ is a descendant of $\core$ by definition of $\C_j$ and the rank of $P_j$ is less than the rank of $\core$ by the definition of $P_j$.
Since $u$ is in the shadow of $\core$ there is a path $Z$ of length at most $b$ that goes from the center of $\core$ to $u$ in the shadow-domain of $\core$.
This path $Z$ has to intersect the center-bag of $P_j$ to get to $u$.

Suppose this intersection occurs at a vertex in the center of $P_j$.
Then the path $Z$ goes from the center of $\core$ outside $\core$ and then into the center of $P_j$ and then to outside of $P_j$. It has to go outside of $P_j$ as $u$ is in the strict shadow of $P_j$.
Also, it has to go outside of $\core$ before entering $P_j$ as $\core$ and the center of $P_j$ are disjoint. 
This means $Z$ has a length of more than $a+a=2a=b$, a contradiction.

Now, suppose this intersection occurs at a vertex in the center-bag of $P_j$ that is not in the center of $P_j$.
Then, by \cref{lem:corewithhigherrank}, there exist at least one core that is an ancestor of $P_j$ and a descendant of $\core$ and having rank lower than $r_j$.
Let $\core'$ be the one among such cores whose center-bag has the smallest level.
The path $Z$ has to intersect the center of $\core'$ as otherwise
there is a core that is an ancestor of $\core'$ and a descendant of $\core$ and having rank lower than $r_j$, contradicting the selection of $\core'$.

Note that $u$ is not in $\core'$ as $\core'$ is disjoint from the shadow-domain of $\core$ by definition of shadow-domain.
However, $u$ is contained in the shadow of $\core'$ as the the part of the path $Z$ from center of $\core'$ to $u$ is contained in the shadow-domain of $\core'$ (by \cref{lem:shadowGraphInclusion}) and has length at most $b$.
Thus, $u$ is contained in the strict shadow of $\core'$.
This implies that $\core'\in \C_j$.
Thus we have a core in $\C_j$ having rank strictly smaller than $r_j$, a contradiction.
\end{proof}\qedhere

\begin{proof}[Proof of \cref{lem:edgeVolbound}]
Each vertex $u$ appears in the strict shadow of at most $r^3$ cores by \cref{lem:vertexshadowbound}. Also, any vertex can only intersect one core of a fixed rank, and hence the total number of cores intersecting a vertex $u$ is at most $r$. Thus, each vertex $u$ appears in the shadow of at most $r^3+r$ cores. This implies that for any edge $(u,v)$, the number of cores such that at least one of $u$ or $v$ appear in the shadow is at most $2r^3+2r$.
Observe that by construction, the shadow of a core $R_i$ contains the ball $\ball_{G_i}(\core'_i,b-a)$.
This implies the lemma.
\end{proof}
\end{proof}
\section{Bounding the Total Weight of Cut Edges}\label{section::boundWeight}

\begin{lemma} \label{lemma:phase_2_bound_weight}
If $b=2a>0$ in our algorithm, then the total capacity of the cut edges $X_2$ picked in the second phase is at most $\dfrac{16}{a}\cdot \ln (\tw+1) \cdot F^*$, where $F^*= \sum_{e \in E}c(e) \cdot l(e)$. 


\end{lemma}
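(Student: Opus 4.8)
The plan is to sum inequality \eqref{eq:phase2} over all cores $R_i$ processed in Phase 2 and bound each of the two resulting sums separately: the logarithmic factor and the total volume contribution. First I would observe that $X_2 \subseteq \bigcup_i \delta_{G_i}(\comp_i)$, so it suffices to bound $\sum_i C_{G_i}(\core'_i,\ra_i)$. By the choice of $\ra_i$ via the region growing lemma, this is at most $\frac{1}{b-a}\sum_i \ln\!\left(\frac{\vol_{G_i}(\core'_i,b-a)}{\vol(\core'_i,0)}\right)\cdot \vol_{G_i}(\core'_i,\ra_i)$. Since the initial volume was set to $\vol(\core'_i,0)=\vol'_{G_i}(\core'_i,b-a)/h$ with $h=2r^3+2r$, we have $\vol_{G_i}(\core'_i,b-a) \le \vol'_{G_i}(\core'_i,b-a)+\vol(\core'_i,0) = (h+1)\cdot\vol(\core'_i,0)$, so each log term is at most $\ln(h+1) = \ln(2r^3+2r+1) = \O(\ln(r+1))$. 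Also $\vol_{G_i}(\core'_i,\ra_i) \le \vol_{G_i}(\core'_i,b-a) \le \vol'_{G_i}(\core'_i,b-a) + \vol'_{G_i}(\core'_i,b-a)/h \le 2\,\vol'_{G_i}(\core'_i,b-a)$ (using $h \ge 1$). Hence
\[
\sum_i C_{G_i}(\core'_i,\ra_i) \;\le\; \frac{2\ln(h+1)}{b-a}\sum_i \vol'_{G_i}(\core'_i,b-a).
\]

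The main obstacle — and the reason \cref{lem:edgeVolbound} was proved — is bounding $\sum_i \vol'_{G_i}(\core'_i,b-a)$ by $O(h)\cdot F^*$. The point is that $\vol'_{G_i}(\core'_i,b-a)$ is a sum, over edges $e=(u,v)$ appearing at least partially in $\ball_{G_i}(\core'_i,b-a)$, of a quantity at most $c(e)\cdot l(e)$ (for edges fully inside) or at most $c(e)\cdot(b-a) \le c(e)\cdot l(e)$-free contributions — more carefully, the fractional-edge contribution $c(u,v)\cdot(b-a-d(\core'_i,u))$ is at most $c(u,v)\cdot l(u,v)$ only if $l(u,v)\ge b-a$, which need not hold, so I would instead bound the fractional contribution crudely by $c(u,v)\cdot l(u,v)$ when the edge lies partly inside the ball of radius $b-a$ (every edge with an endpoint strictly inside radius $b-a$ has its "inside portion" of length at most $l(u,v)$, and the term in \eqref{equation::regionGrowing} for such an edge is at most $c(u,v)\cdot\min(l(u,v), b-a) \le c(u,v)\cdot l(u,v)$). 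Thus $\vol'_{G_i}(\core'_i,b-a) \le \sum_{e \text{ touches } \ball_{G_i}(\core'_i,b-a)} c(e)\cdot l(e)$, and by \cref{lem:edgeVolbound} each edge $e$ touches $\ball_{G_i}(\core'_i,b-a)$ for at most $2r^3+2r = h$ indices $i$. Summing,
\[
\sum_i \vol'_{G_i}(\core'_i,b-a) \;\le\; h \cdot \sum_{e\in E} c(e)\cdot l(e) \;=\; h\cdot F^*.
\]

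Combining the two displays gives $\sum_i C_{G_i}(\core'_i,\ra_i) \le \frac{2h\ln(h+1)}{b-a}\cdot F^*$. Plugging in $b-a = a$ (since $b=2a$) and using $\ln(h+1) = \ln(2r^3+2r+1) \le 4\ln(r+1)$ for all $r\ge 1$ (a routine estimate, e.g. $2r^3+2r+1 \le (r+1)^4$), together with the factor $h=2r^3+2r$ being absorbed — wait, here I must be careful: the bound I want is $\frac{16}{a}\ln(r+1)F^*$, which has no $r^3$ factor, so the $h$ in the numerator must actually cancel. The resolution is that the $h$ in the numerator is spurious: re-examining, $\sum_i \vol'_{G_i}(\core'_i,b-a) \le h F^*$ but we do \emph{not} also pay the factor $2$ from bounding $\vol_{G_i}(\core'_i,\ra_i)$ by $2\vol'$; instead the cleaner route keeps $\vol_{G_i}(\core'_i,\ra_i) \le \vol_{G_i}(\core'_i,b-a) = \vol'_{G_i}(\core'_i,b-a) + \vol'_{G_i}(\core'_i,b-a)/h$, and the leading $\vol'$ term summed gives $h F^*$ times $\frac{\ln(h+1)}{a}$ — so the final bound is $\frac{(h+1)\ln(h+1)}{h\cdot a}\cdot 2hF^* $, which is still $O(r^3\log r / a)\cdot F^*$. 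To actually obtain the claimed $\frac{16}{a}\ln(r+1)F^*$ one must instead charge $\vol'_{G_i}(\core'_i,b-a)$ against $h\cdot\vol(\core'_i,0)$ and bound $\sum_i \vol(\core'_i,0)$ directly: the total initial volume telescopes because $\sum_i \vol(\core'_i,0) = \frac1h\sum_i\vol'_{G_i}(\core'_i,b-a) \le \frac1h\cdot hF^* = F^*$, and then $\vol_{G_i}(\core'_i,\ra_i)\le \vol_{G_i}(\core'_i,b-a) \le (h+1)\vol(\core'_i,0)$ gives $\sum_i C_{G_i}(\core'_i,\ra_i) \le \frac{\ln(h+1)}{a}\sum_i(h+1)\vol(\core'_i,0) \le \frac{(h+1)\ln(h+1)}{a}F^*$ — still with the $r^3$. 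I therefore expect the genuinely delicate step is a sharper accounting: one should bound $\sum_i \vol'_{G_i}(\core'_i,\ra_i)$ (radius $\ra_i$, not $b-a$) by $F^*$ directly since the removed balls $\comp_i = \ball_{G_i}(\core'_i,\ra_i)$ are disjoint (Phase 2 removes $\comp_{i-1}$ before processing $\core_i$), so $\sum_i \vol'_{G_i}(\core'_i,\ra_i) \le F^*$, while the $h$ enters only inside the logarithm via $\vol(\core'_i,0) = \vol'_{G_i}(\core'_i,b-a)/h$. Then $C_{G_i}(\core'_i,\ra_i) \le \frac1a\ln(h+1)\vol_{G_i}(\core'_i,\ra_i) \le \frac1a\ln(h+1)\big(\vol'_{G_i}(\core'_i,\ra_i) + \vol(\core'_i,0)\big)$, and summing the first term gives $\le F^*$, while summing $\vol(\core'_i,0) = \frac1h\vol'_{G_i}(\core'_i,b-a) \le \frac1h\cdot(\text{edge-overlap }h)\cdot F^* \cdot(\text{per-}i) $ — so $\sum_i\vol(\core'_i,0) \le F^*$ by \cref{lem:edgeVolbound}. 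Hence $\sum_i C_{G_i}(\core'_i,\ra_i) \le \frac{2\ln(h+1)}{a}F^* \le \frac{16}{a}\ln(r+1)F^*$, using $\ln(2r^3+2r+1)\le 8\ln(r+1)$ for $r\ge 1$. The hard part is thus pinning down exactly which volume ($\ra_i$ versus $b-a$) is charged where so that disjointness of removed balls handles the additive part and \cref{lem:edgeVolbound} handles only the (smaller) initial-volume part.
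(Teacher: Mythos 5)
Your proposal is correct and, after the self-correction midway through, arrives at essentially the same accounting as the paper: split $\vol_{G_i}(\core'_i,\ra_i)$ into $\vol'_{G_i}(\core'_i,\ra_i)+\vol(\core'_i,0)$, bound $\sum_i\vol'_{G_i}(\core'_i,\ra_i)\le F^*$ by disjointness of the removed components, bound $\sum_i\vol(\core'_i,0)\le F^*$ via \cref{lem:edgeVolbound}, and pull out the factor $\frac{1}{a}\ln(h+1)\le\frac{8}{a}\ln(r+1)$. The earlier detours (which briefly introduce a spurious $r^3$ factor) are not needed; the final paragraph is the proof.
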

\begin{proof}
Let $R_i,S_i,G_i,R'_i,t_i$ be as in Phase 2 description.
Note that the set of cut-edges $X_2$ in Phase $2$ is equal to $\bigcup\delta_{G_i}(S_i)$.
Thus the cost of $X_2$ can be bounded by $\sum_{i \in [p]} C_{G_i}(\core'_i,\ra_i)$. 
Recall that we selected $t_i$ in Phase 2 algorithm such that
    \begin{align*}
    C_{G_i}(\core'_i,\ra_i)  &\leq  \dfrac{1}{b-a} \cdot \ln  \left( \dfrac{\vol_{G_i}(R'_i,b-a)}{\vol(R'_i,0)} \right) \cdot \vol_{G_i}(\core_i',\ra_i)\\
    \end{align*}
Also, recall that for a core $\core_i$, the initial volume $\vol(R'_i, 0)$ was chosen to be $\vol'_{G_i}(R'_i,b-a)/h$ where
$h=2\tw^3 + 2\tw$. 
By \cref{lem:edgeVolbound}, any edge $(u,v)$ contributes at most $h$ times to the sum $\sum_{i\in [p]}\vol'_{G_i}\left(R'_i,b-a\right)$ and we obtain the following: 

\begin{claim} \label{claim: coreset_bound_volume}
    $\sum_{i\in [p]}\dfrac{\vol'_{G_i}\left(R'_i,b-a\right)}{h} \leq \sum_{e \in E} c(e) \cdot l(e) = F^*$
\end{claim}


Thus,
    \begin{align*}
    C_{G_i}(\core'_i,\ra_i)  &\leq  \dfrac{1}{b-a} \cdot \ln  \left( \dfrac{\vol_{G_i}(R'_i,b-a)}{\vol(R'_i,0)} \right) \cdot \vol_{G_i}(\core_i',\ra_i)\\
    & =  \dfrac{1}{b-a} \cdot \ln  \left( \dfrac{\vol(R'_i,0)+\vol'_{G_i}(R'_i,b-a)}{\vol(R'_i,0)} \right) \cdot \vol_{G_i}(\core_i',\ra_i)\\
    & =  \dfrac{1}{b-a} \cdot \ln  \left( 1+\dfrac{\vol'_{G_i}(R'_i,b-a)}{\vol'_{G_i}(R'_i,b-a)/h} \right) \cdot \vol_{G_i}(\core_i',\ra_i)\\
    & =  \dfrac{1}{b-a} \cdot \ln  \left( 1+h \right) \cdot \vol_{G_i}(\core_i',\ra_i)\\
    \end{align*}
Summing now up over all coresets yields:
\begin{align*}
    \sum_{i \in [p]} C_{G_i}(\core'_i,\ra_i)
    &\leq \sum_{i \in [p]}\dfrac{1}{b-a} \cdot \ln  \left( 1+h \right) \cdot \vol_{G_i}(\core_i',\ra_i)\\
    &= \dfrac{1}{a} \cdot \ln \left(2r^3 + 2r + 1\right) \cdot \sum_{i \in [p]}\vol_{G_i}(\core_i',\ra_i) \\
    &\leq \dfrac{8 \cdot \ln (r+1)}{a} \cdot \sum_{i \in [p]}\left(\dfrac{\vol'_{G_i}(\core_i',b-a)}{h} + \vol'_{G_i}(\core_i',\ra_i)\right)\\
    &\leq \dfrac{8 \cdot \ln (r+1)}{a} \cdot ( F^* +F^*)=\dfrac{16 \cdot \ln (r+1)}{a} \cdot F^*
\end{align*}


The last inequality follows from Claim \ref{claim: coreset_bound_volume} and the fact that any edge contributes to at most one term in $\vol'_{G_i}\left(\core_i',\ra_i)\right)$.
\end{proof}
\textbf{Proof of Theorem \ref{Theorem: main}} : 
We showed in \cref{lem:smalldiameterdecomp} that $X_2\cup X_3$ gives a small diameter decomposition.
It only remains to bound the cost of $X_2\cup X_3$.
Lemma \ref{lemma:phase_2_bound_weight} gives a bound on the total capacity of edges in $X_2$. 
It follows directly from Lemma \ref{lemma:sdd_width_0} the capacity of $X_3$ used in Phase 3 is at most $8 \cdot \ln (r+1) \cdot F^*$. Hence the total cost of the small diameter decomposition is at most $8 \cdot \ln (r+1) \cdot F^*+ 128 \cdot \ln (r+1) \cdot F^*=136 \cdot \ln (r+1) \cdot F^*$. This completes the proof of Theorem \ref{Theorem: main}.

\section{Concluding Remarks}
In this paper, we give an algorithm for finding low diameter decomposition of small cost for bounded treewidth graphs, more specifically with a cost logarithmic in treewidth.
Our result also imples a multiflow-multicut gap and an approximation ratio for multicut that is logarithmic in treewidth.
These results are tight asymptotically.
We believe that our techniques could provide useful insights for proving such a (tight) result for the class of $K_r$-minor-free graphs. We also believe that it should be possible to extend our techniques for constructing low-diameter padded decomposition for bounded treewidth graphs (see \cite{abraham2014cops} for the definition of padded decomposition). 

    \bibstyle{alpha}
    \bibliography{main}

\end{document}